\providecommand{\algorithmname}{Algorithm}
  \theoremstyle{plain}
  \newtheorem{lem}{\protect\lemmaname}
\theoremstyle{plain}
\newtheorem{thm}{\protect\theoremname}
  \theoremstyle{plain}
  \newtheorem{cor}{\protect\corollaryname}
  \theoremstyle{definition}
  \newtheorem{defn}{\protect\definitionname}
\author{

\IEEEauthorblockN{Junting~Chen and Urbashi~Mitra}


\IEEEauthorblockA{Ming Hsieh Department of Electrical Engineering, 
University of Southern California \\
Los Angeles, CA 90089 USA, email:\{juntingc, ubli\}@usc.edu}
}
\newcommand{\newac}{\newacronym}
\newcommand{\ac}{\gls}
\newcommand{\Ac}{\Gls}
  \providecommand{\definitionname}{Definition}
  \providecommand{\lemmaname}{Lemma}
\providecommand{\corollaryname}{Corollary}
\providecommand{\theoremname}{Theorem}
\begin{document}

\title{Rotated Eigenstructure Analysis for \\Source Localization without
Energy-decay Models}

\maketitle
%
%
%



\ifdefined\SINGLECOLUMN
	\setkeys{Gin}{width=0.5\columnwidth}
	\newcommand{\figfontsize}{\footnotesize} 
\else
	\setkeys{Gin}{width=1.0\columnwidth}
	\newcommand{\figfontsize}{\normalsize} 
\fi
\begin{abstract}
Herein, the problem of simultaneous localization of two sources given
a modest number of samples is examined. In particular, the strategy
does not require knowledge of the target signatures of the sources
\emph{a priori}, nor does it exploit classical methods based on a
particular decay rate of the energy emitted from the sources as a
function of range. General structural properties of the signatures
such as unimodality are exploited. The algorithm localizes targets
based on the rotated eigenstructure of a reconstructed observation
matrix. In particular, the optimal rotation can be found by maximizing
the ratio of the dominant singular value of the observation matrix
over the nuclear norm of the optimally rotated observation matrix.
It is shown that this ratio has a unique local maximum leading to
computationally efficient search algorithms. Moreover, analytical
results are developed to show that the squared localization error
decreases at a rate $n^{-3}$ for a Gaussian field with a single source,
where $n(\log n)^{2}$ scales proportionally to the number of samples
$M$.
\end{abstract}

\section{Introduction}

\label{sec:intro}

Underwater source detection and localization is an important but challenging
problem. Classical range-based or energy-based source localization
algorithms usually require energy-decay models and the knowledge of
the environment \cite{BecStoLi:J08,QiXiuYua:J13,SheHu:J05,MeeMitNar:J08,LiuHuPan:J12,ZisWax:C88}.
However, critical environment parameters may not be available in many
underwater applications, in which case, classical model-dependent
methods may break down, even when the measurement \ac{snr} is high. 

There have been some studies on source localization using nonparametric
machine learning techniques, such as kernel regressions and support
vector machines \cite{LefRea:J17,NguJorSin:J05,JinSohWon:J10,KimParYooKimPar:J13}.
However, these methods either require a large amount of sensor data,
or some implicit information of the environment, such as the choice
of kernel functions. For example, determining the best kernel parameters
(such as bandwidth) is very difficult given a small amount of data. 

This paper focuses on source detection and localization problems when
only some structural properties of the energy field generated by the
sources are available. Specifically, instead of requiring the knowledge
of how energy decays with distance to the source, the paper aims at
exploiting only the assumption that the closer to the source the higher
energy received, and moreover, the energy field of the source is spatially
invariant and decomposable. In fact, such a structural property is
generic in many underwater applications. The prior work \cite{ChoMit:C15,ChoKumNarMit:J16}
studied the single source case, where an observation matrix is formed
from a few energy measurements of the field in the target area, and
the missing entries of the observation matrix are filled using matrix
completion methods. Knowing that the matrix would be rank-1 under
full and noise-free sampling of the whole area, \ac{svd} is applied
to extract the dominant singular vectors, and the source location
is inferred from analyzing the peaks of the singular vectors. 

Herein, we propose to improve upon two shortcomings in \cite{ChoMit:C15,ChoKumNarMit:J16}:
we make rigorous an estimation/localization bound (versus focusing
on the reduction of the search region) and we provide a method for
localizing two sources. In the two source case, we need to tackle
an additional difficulty that the \ac{svd} of the observation matrix
does not correspond to the signature vectors of the sources. To resolve
this issue, a method of rotated eigenstructure analysis is proposed,
where the observation matrix is formed by rotating the coordinate
system such that the sources are aligned in a row or in a column of
the matrix. We develop algorithms to first localize the central axis
of the two sources, and then separate the sources on the central axis. 

To summarize, we derive algorithms to simultaneously localize up to
two sources based on only a few power measurements in the target area
without knowing any specific energy-decay model. The contributions
of this paper are as follows:
\begin{itemize}
\item We derive the location estimators with analytical results to show
that the squared error decreases at a rate $n^{-3}$ for a Gaussian
field with a single source, where $n(\log n)^{2}$ scales proportionally
to the number of samples $M$. 
\item We develop a localization algorithm for the double source case based
on a novel rotated eigenstructure analysis. We show that the two sources
can be separated even when their aggregate power field has a single
peak. 
\end{itemize}

The rest of the paper is organized as follows. Section \ref{sec:system-model}
gives the system model and assumptions. Section \ref{sec:single-source}
develops location estimator with performance analysis for single source
case. Section \ref{sec:double-source} proposes rotated eigenstructure
analysis for double source case. Numerical results are given in Section
\ref{sec:numerical} and Section \ref{sec:conclusion} concludes this
work. 

\section{System Model}

\label{sec:system-model}

\begin{figure}
\subfigure[]{

\includegraphics[width=0.38\columnwidth]{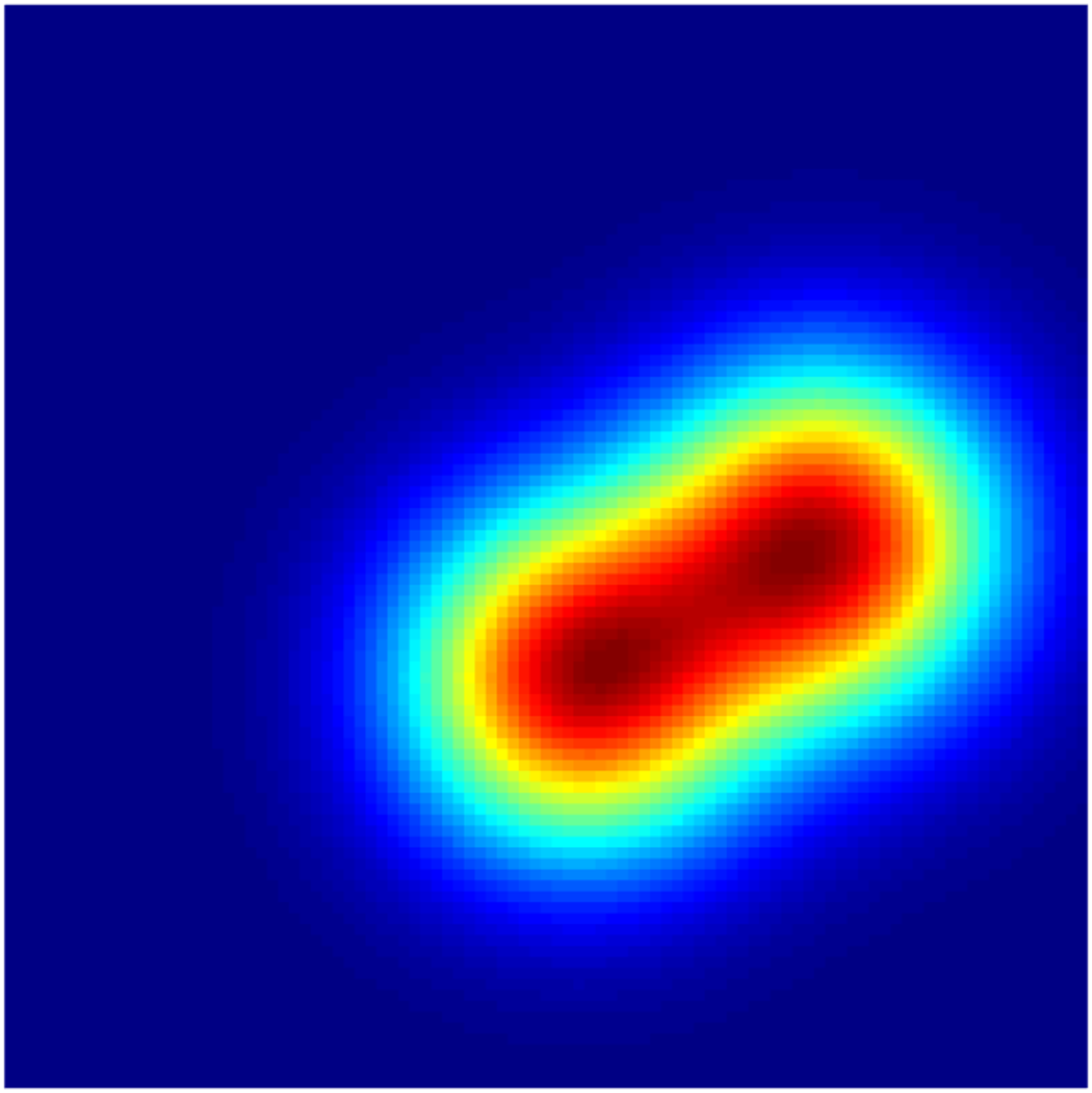}}\subfigure[]{\psfragscanon
\psfrag{0}[][][0.4]{0}
\psfrag{0.5}[][][0.4]{0.5}
\psfrag{-0.5}[][][0.4]{-0.5}
\psfrag{1}[][][0.4]{1}
\psfrag{-1}[][][0.4]{-1}
\psfrag{X-axis}[][][0.5]{X-axis}
\psfrag{Y-axis}[][][0.5]{Y-axis}
\includegraphics[width=0.55\columnwidth]{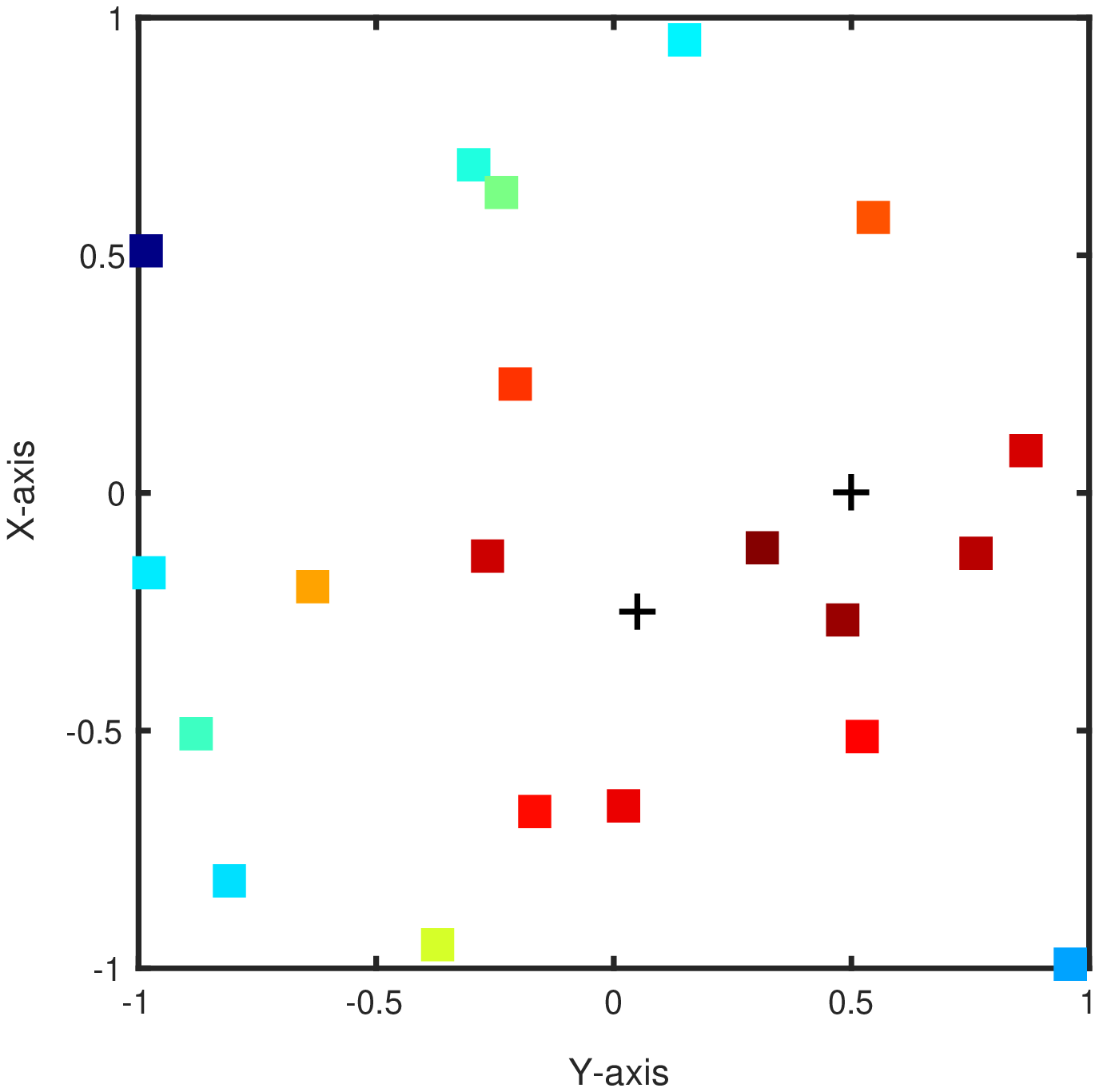}}

\caption{\label{fig:motivational-example} Without knowing the energy-decay
model, to localize the two sources in (a) based on the a small number
of measurement samples in (b), where the colored bricks represent
the sample locations and the black crosses represent the source locations.}
\end{figure}

Consider that there are $K$ ($K=1,2$) sources with unknown locations
$\mathbf{s}_{k}=(x_{k}^{\text{S}},y_{k}^{\text{S}})\in\mathbb{R}^{2}$
located in a bounded area $\mathcal{A}$. Suppose that the sensors
can only measure the aggregate power transmitted by the sources, and
is given by 
\[
h(x,y)=\sum_{k}h_{k}(x,y)
\]
for measurement location $(x,y)$, where 
\begin{equation}
h_{k}(x,y)=\alpha u(x-x_{k}^{\text{S}})u(y-y_{k}^{\text{S}})\label{eq:structure-power-density-func}
\end{equation}
is the power density from source $k$, where $\alpha>0$. The explicit
form of the density function $h_{k}(x,y)$ is unknown to the system,
except that the \emph{characteristic} function $u(x)$ is known to
have the following properties
\begin{enumerate}
\item[a)]  positive semi-definite, i.e., $u(x)\geq0$ for all $x\in\mathbb{R}$
\item[b)]  symmetric, i.e., $u(x)=u(-x)$
\item[c)]  unimodal, i.e., $u^{'}(x)<0$ for $x>0$, 
\item[d)]  smooth, i.e., $|u^{'}(x)|<K_{u}$ for some $K_{u}>0$, and 
\item[e)]  normalized, i.e., $\int_{-\infty}^{\infty}u(x)^{2}dx=1$. 
\end{enumerate}
Note that $u(x)$ can be considered as the marginal power density
function. 

Consider that $M$ power measurements $\{h^{(l)}\}$ are taken over
distinct locations $\mathbf{z}^{(l)}=(x^{(l)},y^{(l)})$, $l=1,2,\dots,M$,
uniformly at random in the target area $\mathcal{A}$. The measurements
are assigned to a $n_{1}\times n_{2}$ observation matrix $\hat{\mathbf{H}}$
as follows. First, partition the target area $\mathcal{A}$ into $n_{1}\times n_{2}$
disjoint cells $\mathcal{G}_{ij}$, $i=1,2,\dots,n_{1}$ and $j=1,2,\dots,n_{2}$,
where $n_{1}$ and $n_{2}$ are to be determined. Second, assign the
power measurements $h^{(l)}$ to the corresponding $(i,j)$th entry
of $\hat{\mathbf{H}}$ as 
\begin{equation}
\hat{H}_{ij}=s(\mathcal{G}_{ij})h^{(l)}\label{eq:sampling-model}
\end{equation}
if $\mathbf{z}^{(l)}\in\mathcal{G}_{ij}$,where $s(\mathcal{G}_{ij})$
measures the area of $\mathcal{G}_{ij}$.\footnote{If multiple samples are close to each other and assigned to the same
entry of $\hat{\mathbf{H}}$, the value of that entry is the average
of the sample values.} Denote $\Omega$ as the set of observed entries of $\hat{\mathbf{H}}$,
i.e., $(i,j)\in\Omega$ if there exists $\mathbf{z}^{(l)}\in\mathcal{G}_{ij}$
such that\textbf{ }$h^{(l)}$ is assigned to $\hat{H}_{ij}$. 

For easy discussion, assume that $\mathcal{A}=[-\frac{L}{2},\frac{L}{2}]\times[-\frac{L}{2},\frac{L}{2}]$,
$n_{1}=n_{2}=n$, and $\mathcal{G}_{ij}$ are rectangles centered
at $(x_{i},y_{j})$, $x_{i}=-\frac{L}{2}+\frac{L}{2n}+\frac{L}{n}(i-1)$,
$y_{j}=-\frac{L}{2}+\frac{L}{2n}+\frac{L}{n}(j-1)$, and have identical
size with each other. Let $\mathbf{H}=\alpha\sum_{k=1}^{K}\mathbf{u}_{k}\mathbf{v}_{k}^{\text{T}}$
be the matrix of ideal observation, where 
\begin{align}
\mathbf{u}_{k} & =\frac{L}{N}\big[u(x_{1}-x_{k}^{\text{S}}),u(x_{2}-x_{k}^{\text{S}}),\dots,u(x_{n}-x_{k}^{\text{S}})\big]^{\text{T}}\label{eq:uk}\\
\mathbf{v}_{k} & =\frac{L}{N}\big[u(y_{1}-y_{k}^{\text{S}}),u(y_{2}-y_{k}^{\text{S}}),\dots,u(y_{n}-y_{k}^{\text{S}})\big]^{\text{T}}\label{eq:vk}
\end{align}
for $k=1,2$. Thus $\mathbf{H}$ has rank at most $K$. For $(i,j)\in\Omega$,
we have $\hat{H}_{ij}\approx H_{ij}$, where the slight difference
is due to sampling away from the centers of the cells $\mathcal{G}_{ij}$.
As a result, $\hat{\mathbf{H}}$ is a sparse and noisy observation
of the low rank matrix $\mathbf{H}$. An application example is illustrated
in \ref{fig:motivational-example}.

The goal of this paper is to find the approximate locations of the
sources using only the spatial invariant property (\ref{eq:structure-power-density-func})
and the four generic properties of the characteristic function $u(x)$.
Note that this problem is non-trivial. We insist on several features
of the algorithm to be developed: it should be robust to structural
knowledge of the signatures of the sources (as captured by $g(x,y)$
in (\ref{eq:structure-power-density-func})). This disallows the use
of parametric regression or parameter estimation for source localization.
In addition, we wish to under-sample the target area using small $M$.
As such, maximum value entries may not represent the true locations
of the sources. While not a focus of the current work, we will use
matrix completion methods and the low rank property of $\mathbf{H}$
as in \cite{ChoKumNarMit:J16,ChoMit:C15} to cope with the under-sampled
observations.

\section{Eigenstructure Analysis for \\Single Source Localization}

\label{sec:single-source}

To simplify the discussion, the following mild assumptions are made.\footnote{The two assumptions are mainly to avoid discussing the effects on
the boundary of $\mathcal{A}$ and the high order noise term in the
sampling noise model (\ref{eq:sampling-noise-upper-bound}). Straight-forward
modifications can be made to handle the boundary effect in practical
algorithms.} 
\begin{enumerate}
\item[A1)]  The observation area $\mathcal{A}$ is large enough, such that there
is only negligible energy spreading outside the area $\mathcal{A}$.
\item[A2)]  The parameter $n$ is not too small, such that $u(x_{i}-x_{k}^{\text{S}})^{2}\delta^{2}\approx\int_{x_{i}}^{x_{i+1}}u(x-x_{k}^{\text{S}})^{2}dx$
and $u(y_{i}-y_{k}^{\text{S}})^{2}\delta^{2}\approx\int_{y_{i}}^{y_{i+1}}u(y-y_{k}^{\text{S}})^{2}dy$
for all $i=1,2,\dots,n$. 
\end{enumerate}
Mathematically, the above assumptions imply that the vectors $\mathbf{u}_{k}$
and $\mathbf{v}_{k}$ have unit norm.

\subsection{Observation Matrix Construction}

We first exploit the low rank property of $\mathbf{H}$ to obtain
the full matrix $\hat{\mathbf{H}}_{\text{c}}$ from the partially
observed matrix $\hat{\mathbf{H}}$. Let $\mathcal{P}_{\Omega}(\mathbf{X})$
be a projection, such that the $(i,j)$th element of matrix $\mathcal{P}_{\Omega}(\mathbf{X})$
is $\big[\mathcal{P}_{\Omega}(\mathbf{X})\big]_{ij}=X_{ij}$ if $(i,j)\in\Omega$,
and $\big[\mathcal{P}_{\Omega}(\mathbf{X})\big]_{ij}=0$ otherwise.
The completed matrix $\hat{\mathbf{H}}_{\text{c}}$ can be found as
the unique solution to the following problem 
\begin{align}
\underset{\mathbf{X}}{\text{minimize}} & \quad\|\mathbf{X}\|_{*}\label{eq:matrix-completion}\\
\text{subject to} & \quad\|\mathcal{P}_{\Omega}(\mathbf{X}-\hat{\mathbf{H}})\|_{\text{F}}\leq\epsilon\nonumber 
\end{align}
where $\|\mathbf{X}\|_{*}$ denotes the nuclear norm of $\mathbf{X}$
and $\epsilon$ is a small parameter to tolerate the discrepancy between
the two matrices. 

To choose a proper dimension $n$ for the observation matrix $\hat{\mathbf{H}}_{c}\in\mathbb{R}^{n\times n}$,
we consider the results in \cite{CanPla:J10}. It has been shown that
under some mild conditions of $\mathbf{H}$ (such as the strong incoherence
property and small rank property), the matrix $\mathbf{H}\in\mathbb{R}^{n\times n}$
can be exactly recovered with a high probability, if the dimension
$n$ satisfies $Cn(\log n)^{2}\leq M$ and noise-free sampling, $\hat{H}_{ij}=H_{ij}$
for $(i,j)\in\Omega$, is performed. Here, $C$ is a positive constant.
Given this, we propose to choose $n=n_{\text{c}}$ as the largest
integer to satisfy $n_{\text{c}}(\log n_{\text{c}})^{2}\leq M/C$.

\subsection{Location Estimator Exploiting Property of Symmetry}

Consider the \ac{svd} of the completed matrix $\hat{\mathbf{H}}_{\text{c}}$
as $\hat{\mathbf{H}}_{\text{c}}=\alpha_{1}\hat{\mathbf{u}}_{1}\hat{\mathbf{v}}_{1}^{\text{T}}+\sum_{i=2}^{n_{\text{c}}}\alpha_{i}\hat{\mathbf{u}}_{i}\hat{\mathbf{v}}_{i}^{\text{T}}$.
We thus model the singular vectors of $\hat{\mathbf{H}}_{c}$ as $\hat{\mathbf{u}}_{1}=\mathbf{u}_{1}+\mathbf{e}_{u}$
and $\hat{\mathbf{v}}_{1}=\mathbf{v}_{1}+\mathbf{e}_{v}$.

Note that the vectors $\mathbf{u}_{1}$ and $\mathbf{v}_{1}$ defined
in (\ref{eq:uk}) and (\ref{eq:vk}), respectively, contain the source
location information due to the unimodal property of $u(x)$. However,
due to the noise vectors $\mathbf{e}_{u}$ and $\mathbf{e}_{v}$,
the source location cannot be found by simply locating the peaks of
$\hat{\mathbf{u}}_{1}$ and $\hat{\mathbf{v}}_{1}$. 

To resolve this difficulty, we exploit the symmetric property of $u(x)$
and develop a location estimator as follows.

Define a reflected correlation function as 
\begin{equation}
\hat{R}(t;\hat{\mathbf{u}}_{1})=\int_{-\infty}^{\infty}\hat{u}(x)\hat{u}(-x+t)dx\label{eq:reflected-correlation}
\end{equation}
where $\hat{u}(x)$ is a (nonparametric) regression function from
vector $\hat{\mathbf{u}}_{1}$. For example, $\hat{u}(x)$ can be
obtained by $\hat{u}(x)=\hat{\mathbf{u}}_{1}(i)$ if $x=x_{i}$, and
by linear interpolation between $\hat{\mathbf{u}}_{1}(i)$ and $\hat{\mathbf{u}}_{1}(i+1)$
if $x_{i}<x<x_{i+1}$. Then the location estimator for $x_{1}^{\text{S}}$
is given by 
\begin{equation}
\hat{x}_{1}^{\text{S}}(\hat{\mathbf{u}}_{1})=\frac{1}{2}\underset{t\in\mathbb{R}}{\text{argmax}}\,\hat{R}(t;\hat{\mathbf{u}}_{1}).\label{eq:location-estimator}
\end{equation}
The location estimator for $y_{1}^{\text{S}}$ can be obtained in
a similar way.

The location estimator (\ref{eq:location-estimator}) exploits the
fact that as $\hat{\mathbf{u}}_{1}$ is symmetric, the reflected correlation
(\ref{eq:reflected-correlation}), which is the correlation between
$\hat{\mathbf{u}}_{1}$ and a reflected and shifted version of $\hat{\mathbf{u}}_{1}$,
is maximized at the source location. Therefore, the estimator $\hat{x}_{1}^{\text{S}}(\hat{\mathbf{u}}_{1})$
tries to the suppress the perturbation from the noise by correlating
over all the entries of $\hat{\mathbf{u}}_{1}$. 

We establish several properties for the estimator $\hat{x}_{1}^{\text{S}}(\hat{\mathbf{u}}_{1})$. 

Consider the autocorrelation for the characteristic function $u(x)$
as 
\begin{equation}
\tau(t)=\int_{-\infty}^{\infty}u(x)u(x-t)dx.\label{eq:autocorrelation-function}
\end{equation}
Then, the following property can be derived. 
\begin{lem}
[Monotonicity] \label{lem:Monotonicity} The autocorrelation function
$\tau(t)$ is non-negative and symmetric. In addition, $\tau(t)$
is strictly decreasing in $t>0$.
\end{lem}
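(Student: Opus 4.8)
The plan is to check the three asserted properties in turn; non-negativity and symmetry are immediate, and the strict monotonicity for $t>0$ is the substantive claim.

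\textbf{Non-negativity and symmetry.} Since $u\ge 0$ by property (a), the integrand in (\ref{eq:autocorrelation-function}) is non-negative for every $t$, hence $\tau(t)\ge 0$. For symmetry, the change of variables $x\mapsto x-t$ gives $\tau(-t)=\int u(x)u(x+t)\,dx=\int u(x-t)u(x)\,dx=\tau(t)$; alternatively this follows from $u$ being even.

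\textbf{Reduction of strict monotonicity.} First I would note that properties (a) and (c) together force $u(x)>0$ for \emph{all} $x$: if $u(x_0)=0$ for some finite $x_0>0$ then $x_0$ would be a global minimizer of $u$, so $u'(x_0)=0$, contradicting $u'(x_0)<0$. (This is what rules out, e.g., a compactly supported $u$, for which $\tau$ would be eventually $0$ and the lemma would fail.) The same properties make $u$ strictly decreasing on $[0,\infty)$, since $u(a)-u(b)=\int_a^b|u'(s)|\,ds>0$ for $0\le a<b$; and $u$ even makes $u'$ odd with $u'(0)=0$. Differentiating (\ref{eq:autocorrelation-function}) under the integral sign --- justified by the uniform bound (d) on $u'$ together with the decay of $u$ implied by (e) --- and substituting $w=x-t$ gives $\tau'(t)=-\int_{\mathbb{R}}u(x)u'(x-t)\,dx=-\int_{\mathbb{R}}u(w+t)u'(w)\,dw$. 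Splitting this at $w=0$ and applying $w\mapsto -w$ with $u'$ odd to the half over $(-\infty,0)$ yields the key identity
\[
\tau'(t)=\int_{0}^{\infty}u'(w)\,\big[\,u(t-w)-u(t+w)\,\big]\,dw .
\]

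\textbf{Conclusion and main obstacle.} For $t>0$ and $w>0$ we have $u'(w)<0$, and, since $|t-w|<t+w$ while $u$ is even and strictly decreasing on $[0,\infty)$, also $u(t-w)-u(t+w)=u(|t-w|)-u(t+w)>0$. Thus the integrand is strictly negative throughout $(0,\infty)$, so $\tau'(t)<0$ for every $t>0$, which proves the claim. The one real difficulty is arriving at the clean formula for $\tau'(t)$: a direct termwise comparison of the integrands of $\tau(t_1)$ and $\tau(t_2)$, or the symmetric substitution $x=y\pm t/2$, both leave a region where the sign of the relevant integrand is ambiguous, whereas the substitution $w=x-t$ followed by reflecting the negative half reduces everything to comparing $u$ at the two arguments $|t-w|$ and $t+w$, where symmetry and unimodality apply directly. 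A derivative-free alternative is the layer-cake representation $u=\int_0^\infty \mathbf{1}\{u>\lambda\}\,d\lambda$ with $\{u>\lambda\}=(-r_\lambda,r_\lambda)$ a symmetric interval: then $\tau(t)=\iint_{[0,\infty)^2}\big|(-r_\lambda,r_\lambda)\cap(t-r_\mu,t+r_\mu)\big|\,d\lambda\,d\mu$, and each intersection length is non-increasing in $|t|$ and, because $r_\lambda\to\infty$ as $\lambda\to 0^+$, strictly decreasing on a set of $(\lambda,\mu)$ of positive measure, which again gives strict monotonicity.
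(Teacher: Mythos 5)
Your proof is correct and follows essentially the same route as the paper's: differentiate under the integral sign, substitute $w=x-t$, split the integral at the origin and use the oddness of $u'$ to fold one half-line onto the other, then conclude from $|t-w|<t+w$ together with the symmetry and strict unimodality of $u$. The extra observations (that properties (a) and (c) force $u>0$ everywhere, and the layer-cake alternative) are correct but not needed beyond what the paper already does.
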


Let the dominant singular vector of $\hat{\mathbf{H}}_{\text{c}}$
as the solution to (\ref{eq:matrix-completion}) be given by $\hat{\mathbf{u}}_{1}=\mathbf{u}_{1}+\mathbf{e}_{1}$,
where $\mathbf{u}_{1}$ is the dominant singular of $\mathbf{H}$.
Let $\overleftarrow{\mathbf{e}}_{1}$ be a vector with reverse elements
of $\mathbf{e}_{1}$, i.e., the $j$th element of $\overleftarrow{\mathbf{e}}_{1}$
equals to the last but the $j$th element of $\mathbf{e}_{1}$. Let
$\mathbf{e}_{1}^{-t}$ be a vector obtained from the $t$-shift of
$\overleftarrow{\mathbf{e}}_{1}$, i.e., for $t>0$, the first $t$
elements of $\mathbf{e}_{1}^{-t}$ are zeros and the remaining $(n_{c}-t)$
elements of $\mathbf{e}_{1}^{-t}$ are identical to the first $(n_{c}-t)$
elements of $\overleftarrow{\mathbf{e}}_{1}$; and for $t<0$, the
first $(n_{c}-t)$ elements of $\mathbf{e}_{1}^{-t}$ are identical
to the last $(n_{c}-t)$ elements of $\overleftarrow{\mathbf{e}}_{1}$
and the remaining $t$ elements of $\mathbf{e}_{1}^{-t}$ are zeros.
With such a notion, we make the following assumption on the singular
vector $\hat{\mathbf{u}}_{1}=\mathbf{u}_{1}+\mathbf{e}_{1}$ of the
completed matrix $\hat{\mathbf{H}}_{\text{c}}$: 
\begin{equation}
|\mathbf{u}_{1}^{\text{T}}\mathbf{e}_{1}^{-t}|\leq C_{e}|\mathbf{u}_{1}^{\text{T}}\mathbf{e}_{1}|\label{eq:approximation-ue}
\end{equation}
for any $0\leq t\leq n_{c}-1$, where $C_{e}<\infty$ is a positive
constant that only depends on the characteristic function $u(x)$
but not $n_{c}$ or $M$. 

Such an approximation is motivated by two observations. First, the
entries of the vector $\mathbf{e}_{1}$ may have roughly the same
chance to take positive values or negative values because both $\mathbf{u}_{1}$
and $\mathbf{u}_{1}+\mathbf{e}_{1}$ have unit norm. Second, the magnitude
of the elements in $\mathbf{u}_{1}$ depends only on the characteristic
function $u(x)$ but not $n_{c}$ or $M$. Although it is difficult
to analytically validate the assumption (\ref{eq:approximation-ue}),
it can be roughly confirmed by massive simulation results. 

As a result, we have the following theorem to characterize the estimation
error of $\hat{x}_{1}^{\text{S}}$.
\begin{thm}
[Localization error bound]\label{thm:Localization-error-bound}
Suppose that the sampling error of $\hat{\mathbf{H}}$ from the true
energy field matrix $\mathbf{H}$ is bounded by $\|\mathcal{P}_{\Omega}(\hat{\mathbf{H}}-\mathbf{H})\|_{\text{F}}\leq\bar{\epsilon}$
and the algorithm parameter $\epsilon$ in (\ref{eq:matrix-completion})
is chosen as $\epsilon=\bar{\epsilon}$. Then, with high probability,
\begin{equation}
|\hat{x}_{1}^{\text{S}}-x_{1}^{\text{S}}|\leq\frac{1}{2}\tau^{-1}\big(1-\mu_{u}L^{6}n_{c}(M)^{-3}+o(n_{c}(M)^{-3})\big)\label{eq:localization-bound}
\end{equation}
where $\tau^{-1}(r)$ is the inverse function of $r=\tau(t)$, $\mu_{u}=C_{e}128u(0)^{2}K_{u}^{2}$,
and $n_{c}(M)$ is the largest integer chosen such that $M\geq Cn_{c}(\log n_{c})^{2}$.
\end{thm}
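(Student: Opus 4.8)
The plan is to argue in two stages: first control the perturbation $\mathbf{e}_1$ of the dominant singular vector $\hat{\mathbf{u}}_1=\mathbf{u}_1+\mathbf{e}_1$ in terms of $n_c$, and then turn that control into a bound on $|\hat{x}_1^{\text{S}}-x_1^{\text{S}}|$ by exploiting the bilinear structure of the reflected correlation $\hat R(\,\cdot\,;\hat{\mathbf{u}}_1)$ together with the monotonicity of $\tau$ established in Lemma~\ref{lem:Monotonicity}.

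For the first stage I would start from the sampling model \eqref{eq:sampling-model}. Property (d) (smoothness, $|u'|<K_u$) together with (a)--(b) gives, for $\mathbf{z}^{(l)}\in\mathcal{G}_{ij}$, that $|u(x_i-x_1^{\text{S}})u(y_j-y_1^{\text{S}})-u(x^{(l)}-x_1^{\text{S}})u(y^{(l)}-y_1^{\text{S}})|\le u(0)K_u\delta$ with $\delta=L/n_c$; multiplying by the cell area $\delta^2$ bounds each sampling error by $\alpha u(0)K_u\delta^3$, and summing over the at most $M$ occupied cells yields $\bar\epsilon\le\sqrt{M}\,\alpha u(0)K_u(L/n_c)^3$. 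Taking $\epsilon=\bar\epsilon$ makes $\mathbf{H}$ feasible in \eqref{eq:matrix-completion}; since $n_c$ is chosen with $M\ge Cn_c(\log n_c)^2$ and $\mathbf{H}$ is rank one and (from the concentration/smoothness of $u$) incoherent, the noisy matrix-completion guarantee of \cite{CanPla:J10} applies with high probability, with an amplification of order $n_c^{3/2}/\sqrt{M}$, hence $\|\hat{\mathbf{H}}_{\text{c}}-\mathbf{H}\|_{\text{F}}=O\!\big(\alpha u(0)K_uL^3n_c^{-3/2}\big)$. Finally, because $\mathbf{H}=\alpha\mathbf{u}_1\mathbf{v}_1^{\text{T}}$ has $\sigma_1=\alpha$ and $\sigma_2=0$, a singular-vector perturbation bound (Davis--Kahan/Wedin) converts this into $\|\mathbf{e}_1\|\le c\,\|\hat{\mathbf{H}}_{\text{c}}-\mathbf{H}\|_{\text{F}}/\alpha$, so that $\|\mathbf{e}_1\|^2=O\!\big(u(0)^2K_u^2L^6n_c^{-3}\big)$, after fixing the sign of $\hat{\mathbf{u}}_1$ so that $\mathbf{u}_1^{\text{T}}\hat{\mathbf{u}}_1\ge0$.

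For the second stage I would expand $\hat R$. The shift-and-reflect map $R_t:f(\cdot)\mapsto f(-\,\cdot+t)$ underlying \eqref{eq:reflected-correlation} is self-adjoint and of norm at most one on zero-padded sequences, so bilinearity gives $\hat R(t;\hat{\mathbf{u}}_1)=\hat R(t;\mathbf{u}_1)+2\,\mathbf{u}_1^{\text{T}}\mathbf{e}_1^{-t}+\mathbf{e}_1^{\text{T}}\mathbf{e}_1^{-t}$, where $\mathbf{e}_1^{-t}=R_t\mathbf{e}_1$ is exactly the shifted--reflected noise vector of the statement. Under A2 the underlying Riemann sums are exact, so $\hat R(t;\mathbf{u}_1)=\tau(t-2x_1^{\text{S}})$; unit norm of $\mathbf{u}_1$ and $\hat{\mathbf{u}}_1$ forces $|\mathbf{u}_1^{\text{T}}\mathbf{e}_1|=\tfrac12\|\mathbf{e}_1\|^2$, so \eqref{eq:approximation-ue} bounds the cross term by $2C_e|\mathbf{u}_1^{\text{T}}\mathbf{e}_1|=C_e\|\mathbf{e}_1\|^2$ and $\|R_t\|\le1$ bounds the quadratic term by $\|\mathbf{e}_1\|^2$. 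Writing $\hat t$ for the maximizer and $t^\star=2x_1^{\text{S}}$, evaluating the expansion at $t^\star$ (where $\tau(0)=1$) gives $\hat R(t^\star;\hat{\mathbf{u}}_1)\ge1-(C_e+1)\|\mathbf{e}_1\|^2$, and evaluating at $\hat t$ and using optimality, $\tau(\hat t-t^\star)\ge\hat R(\hat t;\hat{\mathbf{u}}_1)-(C_e+1)\|\mathbf{e}_1\|^2\ge\hat R(t^\star;\hat{\mathbf{u}}_1)-(C_e+1)\|\mathbf{e}_1\|^2\ge1-2(C_e+1)\|\mathbf{e}_1\|^2$. By Lemma~\ref{lem:Monotonicity}, $\tau$ is even and strictly decreasing on $t>0$, so $\tau^{-1}$ is well defined and decreasing near $\tau(0)=1$, giving $|\hat t-t^\star|\le\tau^{-1}\!\big(1-2(C_e+1)\|\mathbf{e}_1\|^2\big)$. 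Dividing by two, inserting the first-stage bound on $\|\mathbf{e}_1\|^2$, and collecting the numeric constants into the form $\mu_u=128\,C_e u(0)^2K_u^2$ (with the residual discretization and second-order terms absorbed into $o(n_c(M)^{-3})$) yields \eqref{eq:localization-bound}.

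I expect the main obstacle to lie in the first stage: making the matrix-completion amplification factor $n_c^{3/2}/\sqrt{M}$ rigorous requires checking that $\mathbf{H}$ inherits the strong-incoherence hypotheses of \cite{CanPla:J10} from the assumed properties of $u$, handling the small gap between $|\Omega|$ and $M$ caused by collisions, and controlling the singular-vector perturbation despite the zero gap $\sigma_2=0$. On the estimator side, the essential leverage is assumption \eqref{eq:approximation-ue}: it is exactly what allows the correlation $\mathbf{u}_1^{\text{T}}\mathbf{e}_1^{-t}$ between $\mathbf{u}_1$ and a shifted--reflected copy of the noise to be treated as second order in $\|\mathbf{e}_1\|$; without it the cross term need not be negligible. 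A last point to justify carefully is that the error in replacing $\hat R(t;\mathbf{u}_1)$ by $\tau(t-2x_1^{\text{S}})$ is genuinely $o(n_c^{-3})$ rather than the naive $O(n_c^{-2})$ quadrature error, which is where A2 must be used in its strong form.
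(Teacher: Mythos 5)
Your proposal follows essentially the same route as the paper's proof: a Taylor/smoothness bound giving $\bar\epsilon\le\sqrt{M}\,\alpha u(0)K_u(L/n_c)^3$, the Cand\`es--Plan noisy-completion guarantee with amplification $\sqrt{n_c^3/M}$, a Wedin-type singular-vector perturbation bound with gap $\sigma_1-\sigma_2=\alpha$ (so your worry about a ``zero gap'' is moot), and finally the bilinear expansion of $\hat R$ controlled by assumption (\ref{eq:approximation-ue}) and inverted via the monotonicity of $\tau$; note that your $\|\mathbf{e}_1\|^2$ and the paper's $2|\mathbf{u}_1^{\text{T}}\mathbf{e}_1|$ are the same quantity by the unit-norm identity you cite. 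Your two-point optimality bookkeeping is in fact slightly more careful than the paper's (which silently drops a factor from evaluating the perturbation at both $\hat t$ and $t^\star$), at the cost of a constant $2(C_e+1)$ in place of $C_e$; this affects only the value of $\mu_u$, not the $n_c^{-3}$ rate.
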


The specific performance from (\ref{eq:localization-bound}) depends
on the characteristics of the energy field. Intuitively, if $u(x)$
has a sharp peak (large slope of the autocorrelation function $\tau(t)$),
the localization error should be smaller. Consider a numerical example
where the energy field has a Gaussian characteristic function.
\begin{cor}
[Squared error bound in Gaussian field]\label{cor:squared-error-Gaussian}
For a Gaussian characteristic function $u(x)=\big(\frac{2\gamma}{\pi}\big)^{\frac{1}{4}}e^{-\gamma x^{2}}$,
there exists a constant $C_{\mu}$, which only depends on the characteristic
function $u(x)$, such that with high probability, the squared estimation
error is upper bounded by 
\begin{equation}
|\hat{x}_{1}^{\text{S}}-x_{1}^{\text{S}}|^{2}+|\hat{y}_{1}^{\text{S}}-y_{1}^{\text{S}}|^{2}\leq C_{\mu}L^{6}n_{c}(M)^{-3}+o(n_{c}(M)^{-3}).\label{eq:sqaured-error-bound-Gaussian}
\end{equation}
\end{cor}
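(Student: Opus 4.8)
## Proof Proposal for Corollary \ref{cor:squared-error-Gaussian}

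The plan is to specialize the general bound (\ref{eq:localization-bound}) of Theorem \ref{thm:Localization-error-bound} to the Gaussian characteristic function and then convert the bound on the argument of $\tau^{-1}$ into an explicit bound on the localization error. First I would compute the autocorrelation function $\tau(t)$ for $u(x)=(2\gamma/\pi)^{1/4}e^{-\gamma x^2}$ in closed form: a standard Gaussian integral gives $\tau(t)=e^{-\gamma t^2/2}$, which is consistent with Lemma \ref{lem:Monotonicity} (non-negative, symmetric, strictly decreasing on $t>0$). Its inverse is $\tau^{-1}(r)=\sqrt{-\tfrac{2}{\gamma}\ln r}$ for $r\in(0,1]$.

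Next I would substitute $r = 1-\mu_u L^6 n_c(M)^{-3}+o(n_c(M)^{-3})$ into $\tau^{-1}$. Writing $\eta = \mu_u L^6 n_c(M)^{-3}$, which tends to $0$ as $M\to\infty$ (since $n_c(M)\to\infty$), I use the expansion $-\ln(1-\eta) = \eta + o(\eta)$, so that $\tau^{-1}(1-\eta+o(\eta)) = \sqrt{\tfrac{2}{\gamma}\eta + o(\eta)} = \sqrt{\tfrac{2\mu_u}{\gamma}}\, L^3 n_c(M)^{-3/2}+o(n_c(M)^{-3/2})$. Squaring the resulting bound $|\hat{x}_1^{\text{S}}-x_1^{\text{S}}|\le \tfrac12\tau^{-1}(\cdot)$ yields $|\hat{x}_1^{\text{S}}-x_1^{\text{S}}|^2 \le \tfrac{\mu_u}{2\gamma} L^6 n_c(M)^{-3} + o(n_c(M)^{-3})$. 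Since the $y$-coordinate estimator $\hat{y}_1^{\text{S}}$ is constructed identically from $\hat{\mathbf{v}}_1$ and the field is isotropic with the same characteristic function $u$, the identical bound holds for $|\hat{y}_1^{\text{S}}-y_1^{\text{S}}|^2$. Adding the two bounds and setting $C_\mu = \mu_u/\gamma = 128 C_e u(0)^2 K_u^2/\gamma$ gives (\ref{eq:sqaured-error-bound-Gaussian}).

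The only genuinely non-routine point is justifying that the same bound from Theorem \ref{thm:Localization-error-bound} applies verbatim to the $y$-component: this requires that the constants $C_e$, $K_u$, and $u(0)$ entering $\mu_u$ are identical for the $\mathbf{v}_1$-based estimator, which follows because $\mathbf{v}_k$ in (\ref{eq:vk}) has exactly the same structure as $\mathbf{u}_k$ in (\ref{eq:uk}) with the roles of $x$ and $y$ interchanged, and the assumption (\ref{eq:approximation-ue}) is posited symmetrically. The remaining work — evaluating the Gaussian integral for $\tau$, inverting it, and performing the first-order Taylor expansion near $r=1$ — is elementary. One should note $u(0)=(2\gamma/\pi)^{1/4}$ and $K_u = \sup_x|u'(x)| = \sup_x 2\gamma|x|(2\gamma/\pi)^{1/4}e^{-\gamma x^2}$, which is finite and depends only on $\gamma$, so that $C_\mu$ indeed depends only on $u$ as claimed.
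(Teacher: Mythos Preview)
Your proposal is correct and is precisely the intended derivation: the paper does not give a separate proof of the corollary but treats it as an immediate consequence of Theorem~\ref{thm:Localization-error-bound}, noting (in Section~\ref{sec:double-source}) that the Gaussian characteristic function yields $\tau(t)=e^{-\gamma t^{2}/2}$, from which inverting, Taylor-expanding near $r=1$, squaring, and summing the two coordinates gives~(\ref{eq:sqaured-error-bound-Gaussian}) exactly as you outline.
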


Theorem \ref{thm:Localization-error-bound} and Corollary \ref{cor:squared-error-Gaussian}
gives the asymptotic performance of the proposed localization algorithm
without knowing the energy-decay model. For large $M$, the worst
case squared error decays at a rate $n_{c}(M)^{-3}$. As a benchmark,
the squared error of a naive scheme, which estimates the source location
directly from the position of the measurement sample that observes
the highest power, decreases as $M^{-1}$, which is equivalent to
$n_{c}(M)^{-1}(\log n_{c}(M))^{-2}$, much slower than that of the
proposed algorithm. This is because, the granularity of the original
observations is $L/\sqrt{M}$. The results then confirm that by exploiting
the low rank property using matrix completion and the reflected correlation
technique, the proposed algorithm significantly improves the localization
resolution.

\section{Rotated Eigenstructure Analysis for \\Double Source Localization}

\label{sec:double-source}

The location estimator $\hat{x}_{1}^{\text{S}}$ in (\ref{eq:location-estimator})
is based on the intuition that the singular vectors of $\mathbf{H}$
are just the vectors $\mathbf{u}_{1}$ and $\mathbf{v}_{1}$, which
contains the source location in their peaks. However, a similar technique
cannot be applied to the two source case, because $\mathbf{u}_{k}$
and $\mathbf{v}_{k}$ may not be the singular vectors of $\mathbf{H}$,
as the vectors $\{\mathbf{u}_{k}\}$ may not be orthogonal.

\subsection{Optimal Rotation of the Observation Matrix}

When there are two sources, the (ideal) observation matrix $\mathbf{H}$
is not rank-1, expect for the special case where the two sources are
aligned on one of the axes of the coordinate system. 

\Ac{wlog}, assume that the sources are aligned with the $x$-axis,
where $y_{k}^{\text{S}}=C$ for $k=1,2$. Consequently, we have $\mathbf{v}_{1}=\mathbf{v}_{2}$,
and $\mathbf{H}=\alpha\big(\sum_{k}\mathbf{u}_{k}\big)\mathbf{v}_{1}^{\text{T}}$,
which is rank-1. Hence, the right singular vector of $\mathbf{H}$
is $\mathbf{v}_{1}$ and, by analyzing the peak of $\mathbf{v}_{1}$,
the central axis $\hat{y}_{k}^{\text{S}}=C$ can be estimated. 

The above observations suggest that we rotate the coordinate system
such that the sources are aligned with one of the axes. Consider rotating
the coordinate system by $\theta$. The entries of $\hat{\mathbf{H}}_{\text{c}}$
are rearranged into a new observation matrix $\hat{\mathbf{H}}_{\theta}$,
where 
\begin{equation}
\big[\hat{\mathbf{H}}_{\theta}\big]_{(i,j)}=\big[\hat{\mathbf{H}}_{\text{c}}\big]_{(p,q)}\label{eq:H-theta}
\end{equation}
in which $(p,q)$ is the index such that $(x_{p}^{'},y_{q}^{'})$
is the closest point in Euclidean distance to $(\bar{x},\bar{y})$
in the original coordinate system $\mathcal{C}_{0}$, with $\bar{x}=d\cos(\beta+\theta)$
and $\bar{y}=d\sin(\beta+\theta)$. Here $\beta=\angle(x_{i},y_{j})$
is the angle of $(x_{i},y_{j})$ to the $x$-axis of the rotated coordinate
system $\mathcal{C}_{\theta}$, and $d=\|(x_{i},y_{j})\|_{2}$. Note
that $1\leq i,j\leq n^{'}$, where $n^{'}\leq n_{c}$, since the rotation
of the axes induce truncation of some data samples. 

Let the orientation angle of the central axis of the sources \ac{wrt}
the $x$-axis in the original coordinate system $\mathcal{C}_{0}$
be $\theta_{0}$, $\theta_{0}\in[0,\pi)$. Then the desired rotation
for coordinate system $\mathcal{C}_{\theta}$ would be $\theta^{*}=\theta_{0}$
for $\theta_{0}<\frac{\pi}{2}$, or $\theta^{*}=\theta_{0}-\frac{\pi}{2}$
for $\theta_{0}\geq\frac{\pi}{2}$. The desired rotation $\theta$
can be obtained as 
\begin{equation}
\underset{\theta\in[0,\frac{\pi}{2}]}{\text{maximize}}\quad\rho(\theta)\triangleq\frac{\lambda_{1}(\hat{\mathbf{H}}_{\theta})}{\sum_{k}\lambda_{k}(\hat{\mathbf{H}}_{\theta})}\label{eq:rho-function}
\end{equation}
where $\lambda_{k}(\mathbf{A})$ is the $k$th largest singular value
of $\mathbf{A}$. Note that $\rho(\theta)\leq1$ for all $\theta\in[0,\frac{\pi}{2}]$
and $\rho(\theta^{*})=1$, where $\hat{\mathbf{H}}_{\theta}$ becomes
a rank-1 matrix when the sources are aligned with one of the axes.

The maximization problem (\ref{eq:rho-function}) is in general non-convex.
An exhaustive search for the solution $\theta^{*}$ is computationally
expensive, since for each $\theta$, \ac{svd} should be performed
to obtain the singular value profile of $\hat{\mathbf{H}}_{\theta}$.
Therefore, we need to study the properties of the alignment metric
$\rho(\theta)$ in order to develop efficient algorithms for the source
detection.

\subsection{The Unimodal Property}

We also show that the function $\rho(\theta)$ also has the unimodal
property defined as follows. 
\begin{defn}
[Unimodality] A function $f(x)$ is called unimodal in a bounded
region $(a,b)$, if there exists $x_{0}\in[a,b]$, such that $f^{'}(x)f^{'}(y)<0$
for any $a<x<x_{0}<y<b$. 
\end{defn}

The unimodality suggests that $f(x)$ has a single peak in $(a,b)$,
and hence $f(x)$ has a unique local maximum (or minimum). 
\begin{thm}
[Unimodality in the two source case]\label{thm:Unique-local-maximum}
The function $\rho(\theta)$ in (\ref{eq:rho-function}) is unimodal
in $\theta\in(\theta^{*}-\frac{\pi}{4},\theta^{*}+\frac{\pi}{4})$,
if 
\begin{equation}
s\cdot\tau^{'}(t)>t\cdot\tau^{'}(s)\label{eq:correlation-condition}
\end{equation}
for all $0<s<t$, where $\tau^{'}(t)\triangleq\frac{d}{dt}\tau(t)$.
In addition, $\rho(\theta)$ is strictly increasing over $(\theta^{*}-\frac{\pi}{4},\theta^{*})$
and strictly decreasing over $(\theta^{*},\theta^{*}+\frac{\pi}{4})$. 
\end{thm}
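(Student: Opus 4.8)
The plan is to reduce the spectral ratio $\rho(\theta)$ to an explicit scalar function of the autocorrelation $\tau$ and then to differentiate. Throughout I treat $\hat{\mathbf{H}}_{\theta}$ as the ideal re-gridded matrix $\mathbf{H}_{\theta}=\alpha\sum_{k=1,2}\mathbf{u}_{k}\mathbf{v}_{k}^{\text{T}}$, in line with the surrounding discussion where $\rho(\theta^{*})=1$ is asserted. Under assumption A2 the vectors $\mathbf{u}_{k},\mathbf{v}_{k}$ built from the coordinates in the rotated frame $\mathcal{C}_{\theta}$ have unit norm, and $\mathbf{u}_{1}^{\text{T}}\mathbf{u}_{2}\approx\tau(|p_{1}-p_{2}|)$, $\mathbf{v}_{1}^{\text{T}}\mathbf{v}_{2}\approx\tau(|q_{1}-q_{2}|)$, where $(p_{k},q_{k})$ are the coordinates of source $k$ in $\mathcal{C}_{\theta}$. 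Writing $\phi=\theta-\theta^{*}$ for the misalignment and $D=\|\mathbf{s}_{1}-\mathbf{s}_{2}\|$, elementary geometry gives, up to relabeling the axes, $t:=|p_{1}-p_{2}|=D|\cos\phi|$ and $s:=|q_{1}-q_{2}|=D|\sin\phi|$; on the window $|\phi|<\frac{\pi}{4}$ one has $0\le s<t$, with $s=0$ exactly at $\phi=0$.

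The first key step is a closed-form expression for the two nonzero singular values of $\mathbf{H}_{\theta}$. Writing $\mathbf{H}_{\theta}=\alpha UV^{\text{T}}$ with $U=[\mathbf{u}_{1}\,|\,\mathbf{u}_{2}]$ and $V=[\mathbf{v}_{1}\,|\,\mathbf{v}_{2}]$, its squared nonzero singular values are the eigenvalues of $\alpha^{2}(U^{\text{T}}U)(V^{\text{T}}V)$. With $c:=\tau(t)$ and $d:=\tau(s)$, both in $[0,1]$ by Lemma~\ref{lem:Monotonicity} and the normalization $\tau(0)=1$, the Gram matrices $U^{\text{T}}U$ and $V^{\text{T}}V$ have unit diagonal entries and off-diagonal entries $c$ and $d$; evaluating the trace and determinant of their product gives $\lambda_{1}^{2}=\alpha^{2}(1+c)(1+d)$ and $\lambda_{2}^{2}=\alpha^{2}(1-c)(1-d)$. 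Hence
\[
\rho(\theta)=\frac{\lambda_{1}}{\lambda_{1}+\lambda_{2}}=\left(1+\sqrt{r(\phi)}\right)^{-1},\qquad r(\phi):=\frac{(1-c)(1-d)}{(1+c)(1+d)}.
\]
Thus $\rho$ is unimodal with peak at $\theta^{*}$ if and only if $r$ is unimodal with a unique minimum at $\phi=0$; note $r(0)=0$ since $d=\tau(0)=1$ there (matching $\rho(\theta^{*})=1$), while $r(\phi)>0$ for $\phi\neq0$ by the strict monotonicity of $\tau$.

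The second step is the monotonicity of $r$. Set $g(x):=\log\frac{1-\tau(x)}{1+\tau(x)}$, so that $\log r(\phi)=g(t)+g(s)$ and $g'(x)=\frac{-2\tau'(x)}{1-\tau(x)^{2}}$, which is positive for $x>0$ by Lemma~\ref{lem:Monotonicity} (here $0\le\tau(x)<1$). Differentiating in $\phi$ on $(0,\frac{\pi}{4})$ with $\frac{ds}{d\phi}=t$ and $\frac{dt}{d\phi}=-s$ yields $\frac{d}{d\phi}\log r(\phi)=t\,g'(s)-s\,g'(t)=2\left(\frac{t\,|\tau'(s)|}{1-\tau(s)^{2}}-\frac{s\,|\tau'(t)|}{1-\tau(t)^{2}}\right)$. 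I then combine two facts: since $\tau'<0$ on the positive axis, the hypothesis $s\tau'(t)>t\tau'(s)$ is exactly $t\,|\tau'(s)|>s\,|\tau'(t)|$; and since $\tau$ is strictly decreasing, $s<t$ forces $\tau(s)>\tau(t)$, hence $0<1-\tau(s)^{2}<1-\tau(t)^{2}$. Multiplying these two positive inequalities gives $\frac{t\,|\tau'(s)|}{1-\tau(s)^{2}}>\frac{s\,|\tau'(t)|}{1-\tau(t)^{2}}$, so $\frac{d}{d\phi}\log r>0$ on $(0,\frac{\pi}{4})$; thus $r$ is strictly increasing and $\rho$ strictly decreasing on $(\theta^{*},\theta^{*}+\frac{\pi}{4})$. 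Since $s$ and $t$ depend on $\phi$ only through $|\cos\phi|$ and $|\sin\phi|$, both $r$ and $\rho$ are even in $\phi$, so $\rho$ is strictly increasing on $(\theta^{*}-\frac{\pi}{4},\theta^{*})$. Combined with continuity and $\rho(\theta^{*})=1\ge\rho(\theta)$, this is precisely unimodality on $(\theta^{*}-\frac{\pi}{4},\theta^{*}+\frac{\pi}{4})$ with peak $\theta^{*}$.

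The main obstacle I expect is the first step: justifying the replacement of $\hat{\mathbf{H}}_{\theta}$ by the rank-$\le2$ ideal matrix with the stated Gram structure. This needs control of the nearest-neighbor re-gridding error in (\ref{eq:H-theta}) and of the discretization error in $\mathbf{u}_{i}^{\text{T}}\mathbf{u}_{j}\approx\tau(\cdot)$, uniformly over $\theta$, plus a perturbation argument showing that the spurious small singular values of $\hat{\mathbf{H}}_{\theta}$ do not break the unimodal shape; it also needs the geometric identities $t=D|\cos\phi|$, $s=D|\sin\phi|$ for the rotated source positions. Everything downstream---the $2\times2$ spectral computation, the substitution $\rho=(1+\sqrt{r})^{-1}$, and the derivative estimate---is elementary once that reduction is granted. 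The degenerate point $\phi=0$, where $g(s)\to-\infty$ and $g'(0)$ is a $0/0$ expression, causes no trouble, since unimodality only requires strict monotonicity on the two open half-intervals.
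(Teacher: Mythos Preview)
Your proposal is correct and follows essentially the same route as the paper: both reduce the problem to the scalar ratio $r(\phi)=\dfrac{(1-\tau(t))(1-\tau(s))}{(1+\tau(t))(1+\tau(s))}$ with $t=D\cos\phi$, $s=D\sin\phi$, differentiate in $\phi$, and combine condition~(\ref{eq:correlation-condition}) with $\tau(s)>\tau(t)$ to sign the derivative. The only cosmetic differences are that the paper obtains $\lambda_1,\lambda_2$ via an explicit sum/difference SVD (its Lemma on the singular vectors of $\mathbf{H}_\theta$) rather than your $2\times2$ Gram-product eigenvalue computation, and it differentiates $\mu=r$ directly rather than $\log r$; the final inequality combination is identical.
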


The result in Theorem \ref{thm:Unique-local-maximum} is powerful,
since it confirms that the function $\rho(\theta)$ is unimodal within
a $\frac{\pi}{2}$-window, and there is a unique local maximum, when
the autocorrelation of the energy field characteristic function $u(x)$
agrees with the condition (\ref{eq:correlation-condition}). Note
that $\rho(\theta)$ is also symmetric \ac{wrt} $\theta=\theta^{*}$.
As a result, a simple bisection search algorithm can efficiently find
the global optimal solution $\theta^{*}$ to (\ref{eq:rho-function}).
An example algorithm is given in Algorithm \ref{alg:optimal-rotation-angle}.

\begin{algorithm}
\begin{enumerate}
\item Let $\theta_{\text{L}}=0$ and $\theta_{\text{R}}=\frac{\pi}{2}$.
Choose an integer $T\geq1$ for smoothing (for sampling noise tolerance). 
\item \label{enu:alg-loop-start}Let $\theta_{c}=\frac{1}{2}(\theta_{\text{L}}+\theta_{\text{R}})$.
Take uniformly $T$ points in $[\theta_{\text{L}},\theta_{c}]$, i.e.,
$\theta_{i}=\theta_{c}-\frac{i}{T}(\theta_{c}-\theta_{\text{L}})$,
and compute $\bar{\rho}_{\text{L}}(\theta_{\text{L}},\theta_{\text{R}})=\frac{1}{T}\sum_{i=1}^{T}\rho(\theta_{i})$
using (\ref{eq:H-theta}) and (\ref{eq:rho-function}). Compute $\bar{\rho}_{\text{R}}(\theta_{\text{L}},\theta_{\text{R}})$
in the similar way. 
\item If $\bar{\rho}_{\text{L}}>\bar{\rho}_{\text{R}}$, then $\theta_{\text{R}}=\theta_{c}$;
otherwise, $\theta_{\text{L}}=\theta_{c}$.
\item Repeat form Step \ref{enu:alg-loop-start}) until $\theta_{\text{R}}-\theta_{\text{L}}$
small enough. Then $\theta^{*}=\theta_{c}$ is found. 
\end{enumerate}
\caption{\label{alg:optimal-rotation-angle}Search for the optimal rotation
angle}
\end{algorithm}

Note that condition (\ref{eq:correlation-condition}) can be satisfied
by a variety of energy fields. For example, for Laplacian field $u(x)=\sqrt{\gamma}e^{-\gamma|x|}$,
we have $\tau(t)=(1+\gamma t)e^{-\gamma t}$, and $\tau^{'}(t)=-\gamma^{2}te^{-\gamma t}$;
for Gaussian field $u(x)=\big(\frac{2\gamma}{\pi}\big)^{\frac{1}{4}}e^{-\gamma x^{2}}$,
we have $\tau(t)=e^{-\gamma t^{2}/2}$, and $\tau^{'}(t)=-\gamma te^{-\gamma t^{2}/2}$.
In both cases, condition (\ref{eq:correlation-condition}) is satisfied. 


%

%

\subsection{Source Detection}

%


In the coordinate system $\mathcal{C}_{\theta}$ under optimal rotation
$\theta=\theta^{*}$ (assuming alignment on the $x$-axis), the left
and right singular vectors of $\hat{\mathbf{H}}_{\theta}$ can be
modeled as $\hat{\mathbf{u}}_{1}=\frac{1}{2}(\mathbf{u}_{1}(\theta^{*})+\mathbf{u}_{2}(\theta^{*}))+\mathbf{e}_{u}$
and $\hat{\mathbf{v}}_{1}=\mathbf{v}_{1}(\theta^{*})+\mathbf{e}_{v}$,
respectively. Correspondingly, the $y$-coordinates of the sources
can be the found using estimator (\ref{eq:location-estimator}) based
on reflected correlation 
\begin{equation}
\hat{y}_{1}^{\text{S}}(\hat{\mathbf{v}}_{1};\theta^{*})=\hat{y}_{2}^{\text{S}}(\hat{\mathbf{v}}_{1};\theta^{*})=\frac{1}{2}\underset{t\in\mathbb{R}}{\text{argmax}}\,\hat{R}(t;\hat{\mathbf{v}}_{1}).\label{eq:location-estimator-two-source-y}
\end{equation}

To find the $x$-coordinates, note that the function $u_{1}(x)=\frac{1}{2}\big(u(x-x_{1}^{\text{S}})+u(x-x_{2}^{\text{S}})\big)$
is symmetric at $x=\frac{1}{2}(x_{1}^{\text{S}}+x_{2}^{\text{S}})$.
Therefore, the center of the two sources can be found by 
\begin{equation}
\hat{c}=\frac{1}{2}\underset{t\in\mathbb{R}}{\text{argmax}}\,\hat{R}(t;\hat{\mathbf{u}}_{1}).\label{eq:location-estimator-two-source-center}
\end{equation}

In addition, after estimating $\hat{y}_{1}^{\text{S}}$, the marginal
power density function $u(x)$ can be obtained as $\hat{u}(y)=\hat{v}_{1}(y-\hat{y}_{1}^{\text{S}})$,
where $\hat{v}_{1}(y)$ is a regression function from $\hat{\mathbf{v}}_{1}$
(for example, by linear interpolation among $y_{1},y_{2},\dots,y_{n_{c}}$).
As a results, the $x$-coordinates of the two sources can be found
using similar techniques as spread spectrum early gate synchronization
\cite{peterson:b95}, and obtained as $\hat{x}_{1}^{\text{S}}(\theta^{*})=\hat{c}-\hat{d}$
and $\hat{x}_{2}^{\text{S}}(\theta^{*})=\hat{c}+\hat{d}$, where 
\begin{align}
\hat{d} & =\underset{d\geq0}{\text{argmax}}\quad Q(d;\hat{\mathbf{u}}_{1},\hat{\mathbf{v}}_{1})\label{eq:location-estimator-two-source-d}
\end{align}
and 
\[
Q(d;\hat{\mathbf{u}}_{1},\hat{\mathbf{v}}_{1})\triangleq\frac{1}{2}\int_{-\infty}^{\infty}\hat{u}_{1}(x)\Big(\hat{u}(x-\hat{c}-d)+\hat{u}(x-\hat{c}+d)\Big)dx.
\]
It is straight-forward to show that $Q(d;\hat{\mathbf{u}}_{1},\hat{\mathbf{v}}_{1})$
is maximized at $d^{*}=\frac{1}{2}|x_{1}^{\text{S}}-x_{2}^{\text{S}}|$.

As a benchmark, consider a naive scheme that estimates $x_{1}^{\text{S}}$
and $x_{2}^{\text{S}}$ by analyzing the peaks of $\hat{\mathbf{u}}_{1}$.
However, such naive strategy cannot work for small source separation,
because if $d=\frac{1}{2}|x_{1}^{\text{S}}-x_{2}^{\text{S}}|$ is
too small, the aggregate power density function $\tilde{u}_{1}(x)=u(x-x_{1}^{\text{S}})+u(x-x_{1}^{\text{S}}-d)$
would be unimodal and there is only one peak in $\hat{\mathbf{u}}_{1}$.
As a comparison, the proposed procedure estimator from procedure (\ref{eq:location-estimator-two-source-y})--(\ref{eq:location-estimator-two-source-d})
does not such a limitation.

\section{Numerical Results}

\label{sec:numerical}

\begin{figure}
\begin{centering}
\subfigure[One source]{
\psfragscanon
\psfrag{MSE}[][][0.7]{MSE}
\psfrag{M samples}[][][0.7]{Number of samples $M$}
\psfrag{20}[][][0.7]{20}
\psfrag{200}[][][0.7]{200}
\psfrag{2000}[][][0.7]{2000}
\psfrag{Naive}[Bl][Bl][0.65]{Naive}
\psfrag{Asymptotic decay rate}[Bl][Bl][0.65]{Worst case bound}
\psfrag{Proposed}[Bl][Bl][0.65]{Proposed}\includegraphics[width=1\columnwidth]{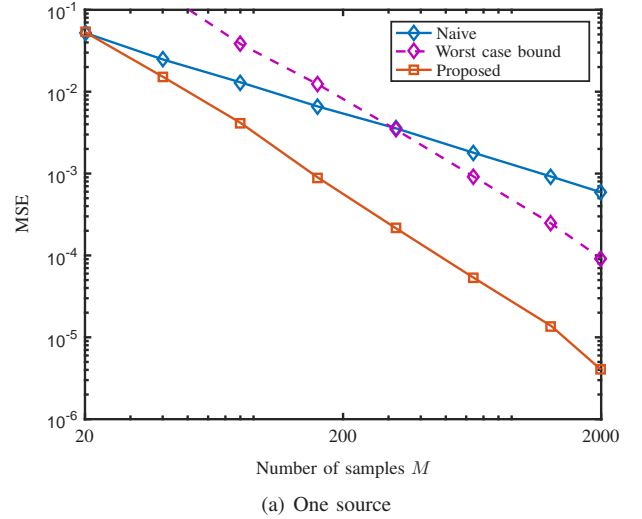}}
\par\end{centering}
\begin{centering}
\subfigure[Two sources]{
\psfragscanon
\psfrag{MSE}[][][0.7]{MSE}
\psfrag{M samples}[][][0.7]{Number of samples $M$}
\psfrag{20}[][][0.7]{20}
\psfrag{200}[][][0.7]{200}
\psfrag{2000}[][][0.7]{2000}
\psfrag{Naive}[Bl][Bl][0.65]{Naive}
\psfrag{Asymptotic decay rate}[Bl][Bl][0.65]{Worst case bound}
\psfrag{Proposed}[Bl][Bl][0.65]{Proposed}\includegraphics[width=1\columnwidth]{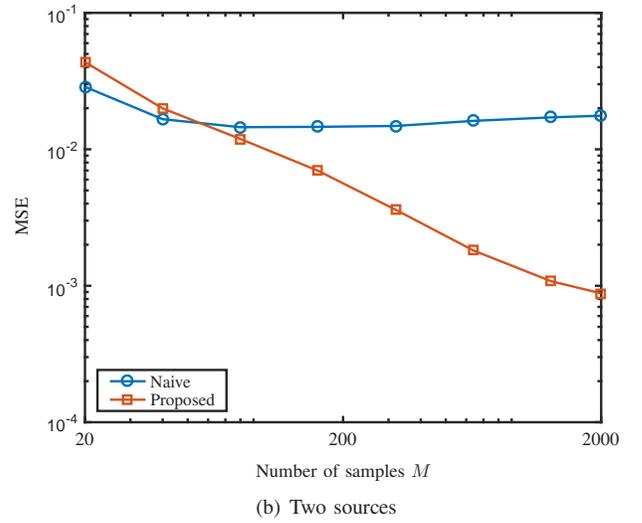}}
\par\end{centering}
\caption{\label{fig:mse} \ac{mse} of the source location versus the number
of samples $M$.}
\end{figure}

In this section, we evaluate the performance of the proposed location
estimator in both single source and double source cases. Two sources
are placed in the area $[-0.5,0.5]\times[-0.5,0.5]$ uniformly and
independently at random, with the restriction that the distance between
the two sources is no more than $0.5$.\footnote{When the two sources are far apart, the problem degenerates to two
single-source-localization problems.} The power field generated by each source in an underwater environment
is modeled as $h_{k}(x,y)=e^{-20(x-x_{k}^{\text{S}})^{2}-20(x-y_{k}^{\text{S}})^{2}}$,
$k=1,2$. There are $M$ power measurements taken in the area $\mathcal{A}=[-1,1]\times[-1,1]$
uniformly at random. The parameter $n_{c}$ of the proposed observation
matrix $\hat{\mathbf{H}}\in\mathbb{R}^{n_{c}\times n_{c}}$ is chosen
as the largest integer satisfying $n_{c}(\log n_{c})^{2}\leq M/C$,
for $C=1$. 

As a benchmark, the proposed location estimation is compared with
the naive scheme, which determines the source location directly form
the position of the measurement sample that observes the highest power.
In the two source case, the naive algorithm aims at detecting either
one of the sources, and the corresponding localization error is computed
as $\mathcal{E}_{\text{\scriptsize naive}}^{2}=\min\{\|\hat{\mathbf{s}}_{\text{\scriptsize naive}}-\mathbf{s}_{1}\|^{2},\|\hat{\mathbf{s}}_{\text{\scriptsize naive}}-\mathbf{s}_{2}\|^{2}\}$.
As a comparison, the localization error of the proposed algorithm
is computed as $\mathcal{E}^{2}=\frac{1}{2}(\|\hat{\mathbf{s}}_{1}-\mathbf{s}_{1}\|^{2}+\|\hat{\mathbf{s}}_{2}-\mathbf{s}_{2}\|^{2})$.

\begin{figure}
\begin{centering}
\psfragscanon
\psfrag{X axis}[][][0.7]{X axis}
\psfrag{Y axis}[][][0.7]{Y axis}
\psfrag{0}[][][0.7]{0}
\psfrag{1}[][][0.7]{1}
\psfrag{-1}[][][0.7]{-1}
\psfrag{0.5}[][][0.7]{0.5}
\psfrag{-0.5}[][][0.7]{-0.5}
\includegraphics[width=1\columnwidth]{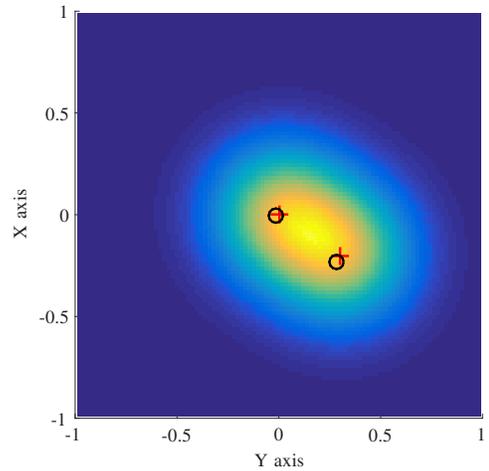}
\par\end{centering}
\caption{\label{fig:map} Localizing two sources using $M=200$ samples, where
red crosses denote the true source locations, and black circles denote
the estimates. The color map represents the aggregate power field
generated by the two sources.}
\end{figure}

Fig.~\ref{fig:mse} depicts the \ac{mse} of the source location
versus the number of samples $M$. In the single source case,the coefficient
of the worst case upper bound (\ref{eq:sqaured-error-bound-Gaussian})
is chosen as $C_{\mu}=1$ to demonstrate the asymptotic decay rate
of the worst case squared error bound. The decay rate of the analytic
worst case error bound is roughly the same as the \ac{mse} obtained
from the numerical experiment. It is expected that as $M$ increases,
the two curves merge in an asymptotic way. As a benchmark, the proposed
scheme requires less than half of the samples to achieve similar performance
to that of the naive baseline even for small $M$ (around $50$).
More importantly, it demonstrates a higher \ac{mse} decay rate, where
for medium $M$ (around $200$), the proposed scheme reduces the number
of samples to $1/10$. In the double source case, there is an error
floor for the naive scheme, because the location that observes the
highest power may not be either one of the source locations. As a
comparison, there is no error floor in for proposed scheme as $M$
increases. 

Fig.~\ref{fig:map} shows an example on simultaneously localizing two
sources (red crosses). Although the aggregate power field has only
one peak, the algorithm (black circles) is able to separate the two
sources. 

\section{Conclusions}

\label{sec:conclusion}

This paper developed source localization algorithms from a few power
measurement samples, while no specific energy-decay model is assumed.
Instead, the proposed method only exploited the structural property
of the power field generated by the sources. Analytical results were
developed to demonstrate that the proposed algorithm decreases the
localization error at a higher rate than the baseline algorithm when
the number of samples increases. In addition, a rotated eigenstructure
analysis technique was derived for simultaneously localizing two sources.
Numerical results demonstrate the performance advantage in localizing
single or double sources.

\section*{Acknowledgments}

This research was supported, in part, by National Science Foundation
under Grant NSF CNS-1213128, CCF-1410009, CPS-1446901, Grant ONR N00014-15-1-2550,
and Grant AFOSR FA9550-12-1-0215.

\section*{Appendix}

\subsection{Proof of Lemma \ref{lem:Monotonicity}}

\begin{align}
\tau^{'}(t) & =\frac{d}{dt}\int_{-\infty}^{\infty}u(x)u(x-t)dx\nonumber \\
 & =\int_{-\infty}^{\infty}-u(x)u^{'}(x-t)dx\nonumber \\
 & =-\int_{-\infty}^{0}u(z+t)u^{'}(z)dz-\int_{0}^{\infty}u(z+t)u^{'}(z)dz\nonumber \\
 & =-\int_{-\infty}^{0}u(z+t)u^{'}(z)dz+\int_{0}^{\infty}u(z+t)u^{'}(-z)dz\label{eq:app-lem-tau-eq1}\\
 & =-\int_{-\infty}^{0}u(z+t)u^{'}(z)dz+\int_{-\infty}^{0}u(-w+t)u^{'}(w)dw\label{eq:app-lem-tau-eq2}\\
 & =-\int_{-\infty}^{0}\big[u(z+t)-u(-z+t)\big]u^{'}(z)dz\nonumber \\
 & =-\int_{-\infty}^{0}\big[u(z+t)-u(z-t)\big]u^{'}(z)dz\label{eq:app-lem-tau-eq3}\\
 & <0\nonumber 
\end{align}
where (\ref{eq:app-lem-tau-eq1}) is due to the change of variable
$z=x-t$ and $u^{'}(z)=-u^{'}(-z)$, (\ref{eq:app-lem-tau-eq2}) is
to change the variable $z=-w$, (\ref{eq:app-lem-tau-eq3}) exploits
the fact that $u(x)=u(-x)$, and the last inequality is due to $u(z+t)-u(z-t)>0$
and $u^{'}(z)>0$ for all $z<0$.

\subsection{Proof of Theorem \ref{thm:Localization-error-bound}}

To simplify the algebra, we only focus on the dominant terms \ac{wrt}
$n_{c}$ as $n_{c}$ goes large. 

\subsubsection{Upper Bound of the Sampling Error}

For notational convenience, define $u_{1}(x)=u(x-x_{1}^{\text{S}})$
and $v_{1}(y)=u(x-y_{1}^{\text{S}})$. Consider the sampling position
$(x,y)\in\mathcal{G}_{ij}$. Using a Taylor expansion, we have 
\begin{align*}
 & |h_{1}(x,y)-h_{1}(x_{1},y_{1})|\\
 & \quad=\alpha|u_{1}(x)v_{1}(y)-u_{1}(x_{1})v_{1}(y_{1})|\\
 & \quad=\alpha\big|\big(u_{1}(x_{1})+u_{1}^{'}(x_{1})(x-x_{1})\big)\\
 & \qquad\qquad\times\big(v_{1}(y_{1})+v_{1}^{'}(y_{1})(y-y_{1})\big)-u_{1}(x_{1})v_{1}(y_{1})\\
 & \qquad\qquad\qquad\qquad\qquad\qquad+o(x-x_{1})+o(y-y_{1})\big|\\
 & =\alpha\big|u_{1}(x_{1})v_{1}^{'}(y_{1})(y-y_{1})+v_{1}(y_{1})u_{1}^{'}(x_{1})(x-x_{1})\big|\\
 & \qquad\qquad\qquad\qquad\qquad\qquad+o(x-x_{1})+o(y-y_{1})\\
 & \leq\alpha u(0)K_{u}\frac{L}{n_{c}}+o\Big(\frac{L}{n_{c}}\Big)
\end{align*}
from the property $u(x)\leq u(0)$ and $|u^{'}(x)|\leq K_{u}$. 

From (\ref{eq:sampling-model}), we have 
\begin{align}
|\hat{H}_{ij}-H_{ij}| & =\bigg(\frac{L}{n_{c}}\bigg)^{2}|h_{1}(x,y)-h_{1}(x_{1},y_{1})|\nonumber \\
 & \leq\alpha u(0)K_{u}\frac{L^{3}}{n_{c}^{3}}+o\Big(\frac{L^{3}}{n_{c}^{3}}\Big).\label{eq:sampling-noise-upper-bound}
\end{align}
As a result, 
\begin{align*}
\|\mathcal{P}_{\Omega}(\hat{\mathbf{H}}_{\text{c}}-\mathbf{H})\|_{\text{F}}^{2} & =\sum_{(i,j)\in\Omega}|\hat{H}_{ij}-H_{ij}|^{2}\\
 & \leq M\left(\alpha u(0)K_{u}L^{3}/n_{c}^{3}\right)^{2}\triangleq\bar{\epsilon}^{2}.
\end{align*}

\subsubsection{Matrix Completion with Noise and Singular Vector Perturbation}

When there is sampling noise, the performance of matrix completion
can be evaluated by the following result. 
\begin{lem}
[Matrix completion with noise \cite{CanPla:J10}]\label{prop:matrix-completion-noise}
Consider that $\epsilon$ in (\ref{eq:matrix-completion}) is chosen
such that $\|\mathcal{P}_{\Omega}(\hat{\mathbf{H}}-\mathbf{H})\|_{\text{F}}\leq\epsilon=\bar{\epsilon}$.
Then, with high probability, 
\begin{equation}
\delta\triangleq\|\hat{\mathbf{H}}_{\text{c}}-\mathbf{H}\|_{\text{F}}\leq4\sqrt{\frac{(2+p)n_{c}}{p}}\bar{\epsilon}+2\bar{\epsilon}\label{eq:matrix-completion-noise-bound}
\end{equation}
where $p=M/n_{c}^{2}$.
\end{lem}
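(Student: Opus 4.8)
The stated bound is precisely the noisy matrix-completion guarantee of \cite{CanPla:J10}, so the plan is not to re-derive it from scratch but to verify that the present setup satisfies every hypothesis of that result and then to read off the conclusion with the correct identification of quantities. First I would recall the theorem of \cite{CanPla:J10}: if a rank-$r$ matrix $\mathbf{M}\in\mathbb{R}^{n_1\times n_2}$ obeys the strong incoherence property, if its entries are observed on a set $\Omega$ sampled uniformly at random with $|\Omega|$ large enough for exact recovery in the noiseless regime, and if the nuclear-norm program is solved with a data-fidelity tolerance $\delta$ satisfying $\|\mathcal{P}_\Omega(\mathbf{Y}-\mathbf{M})\|_{\text{F}}\le\delta$, then its minimizer $\hat{\mathbf{M}}$ obeys $\|\hat{\mathbf{M}}-\mathbf{M}\|_{\text{F}}\le 4\sqrt{(2+p)\min(n_1,n_2)/p}\,\delta+2\delta$ with high probability, where $p=|\Omega|/(n_1 n_2)$ is the fraction of observed entries.

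Second I would make the correspondence with the quantities in (\ref{eq:matrix-completion}). The ideal energy-field matrix $\mathbf{H}=\alpha\sum_{k=1}^{K}\mathbf{u}_k\mathbf{v}_k^{\text{T}}$ plays the role of $\mathbf{M}$ and has rank at most $K\le 2$, so the low-rank hypothesis holds; the partial observation $\hat{\mathbf{H}}$ plays the role of $\mathbf{Y}$, and the minimizer $\hat{\mathbf{H}}_{\text{c}}$ of (\ref{eq:matrix-completion}) plays the role of $\hat{\mathbf{M}}$. The dimensions are $n_1=n_2=n_c$, so $\min(n_1,n_2)=n_c$, and the observed set $\Omega$ arises from $M$ measurement locations drawn uniformly at random in $\mathcal{A}$, matching the uniform-sampling model; identifying $|\Omega|=M$ gives $p=M/n_c^2$, as claimed. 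The tolerance is chosen as $\delta=\epsilon=\bar\epsilon$ with $\|\mathcal{P}_\Omega(\hat{\mathbf{H}}-\mathbf{H})\|_{\text{F}}\le\bar\epsilon$ by hypothesis, so the data-fidelity condition is met. Finally, the sampling-density requirement of \cite{CanPla:J10} is exactly the condition $n_c(\log n_c)^2\le M/C$ used to select $n_c$, which guarantees enough samples for recovery with high probability. Substituting $\delta=\bar\epsilon$ and $p=M/n_c^2$ into the quoted bound yields (\ref{eq:matrix-completion-noise-bound}).

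The substantive content therefore lies entirely in checking hypotheses, and the main obstacle is making two of them rigorous in the present model. The first is the strong incoherence property of $\mathbf{H}$: because $\mathbf{H}$ is built from shifted samples of the characteristic function $u(x)$, its singular vectors are spread across coordinates, and I would argue incoherence directly from the smoothness and boundedness of $u$ guaranteed by properties (c)--(e), so that no single row or column dominates. The second, and more delicate, is that the paper's cell-occupancy sampling model is not literally the i.i.d.\ uniform entry-sampling model of \cite{CanPla:J10}: several measurement locations may fall in the same cell (handled by averaging per the footnote), so $|\Omega|$ is slightly below $M$ and the induced distribution on $\Omega$ is uniform over cells only in expectation. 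I would absorb these discrepancies into the ``with high probability'' qualifier and the constant $C$, arguing that for the chosen $n_c$ the expected number of collisions is of lower order so that $|\Omega|=M(1-o(1))$ and the uniform-sampling bound applies up to the stated asymptotics.
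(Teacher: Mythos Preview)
Your proposal is correct and matches the paper's treatment: the lemma is quoted directly from \cite{CanPla:J10} without proof, and your plan of invoking that theorem after identifying $\mathbf{H}$, $\hat{\mathbf{H}}$, $\hat{\mathbf{H}}_{\text{c}}$, $n_c$, and $p=M/n_c^2$ with the corresponding quantities in the cited result is exactly the intended justification. If anything, your discussion of the incoherence of $\mathbf{H}$ and of the discrepancy between cell-occupancy sampling and i.i.d.\ entry sampling is more careful than the paper, which absorbs both issues into the phrase ``under some mild conditions'' and the qualifier ``with high probability.''
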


As we focus on not too small $n_{c}$, which is chosen to be such
that $M\approx Cn_{c}(\log n_{c})^{2}$, the bound (\ref{eq:matrix-completion-noise-bound})
can be simplified as 
\begin{align*}
\delta & \leq4\sqrt{\frac{(2+Cn_{c}(\log n_{c})^{2}/n_{c}^{2})n_{c}}{Cn_{c}(\log n_{c})^{2}/n_{c}^{2}}}\epsilon+2\epsilon\\
 & \approx\sqrt{\frac{32}{C}}\frac{n_{c}}{\log n_{c}}\epsilon.
\end{align*}

Let $\mathbf{u}_{1}$ and $\hat{\mathbf{u}}_{1}=\mathbf{u}_{1}+\mathbf{e}_{1}$
be the dominant left singular vectors of $\mathbf{H}$ and $\hat{\mathbf{H}}_{c}$,
respectively. We exploit the following classical result from singular
vector perturbation analysis. 
\begin{lem}
[Singular vector perturbation \cite{vu:J11singular}]\label{lem:Singular-vector-perturbation}
Let $\sigma_{1}$ and $\sigma_{2}$ be the first and second dominant
singular values of $\mathbf{H}$. Then, 
\[
\sin\angle(\mathbf{u}_{1},\hat{\mathbf{u}}_{1})\leq\frac{2\|\hat{\mathbf{H}}_{\text{c}}-\mathbf{H}\|_{\text{F}}}{\sigma_{1}-\sigma_{2}}=\frac{2\delta}{\sigma_{1}-\sigma_{2}}.
\]
\end{lem}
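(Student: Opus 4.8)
The statement is a deterministic $\sin\theta$-type perturbation bound for the leading singular subspace, so the plan is to reduce it to the classical Davis--Kahan/Wedin inequality (the cited work \cite{vu:J11singular} follows essentially this route) rather than re-derive it from first principles. First I would pass from singular vectors of $\mathbf{H}$ to eigenvectors of a positive semidefinite matrix: $\mathbf{u}_{1}$ is the top eigenvector of the Gram matrix $\mathbf{G}\triangleq\mathbf{H}\mathbf{H}^{\text{T}}$ with eigenvalue $\sigma_{1}^{2}$, the second eigenvalue being $\sigma_{2}^{2}$, and similarly $\hat{\mathbf{u}}_{1}$ is the top eigenvector of $\hat{\mathbf{G}}\triangleq\hat{\mathbf{H}}_{\text{c}}\hat{\mathbf{H}}_{\text{c}}^{\text{T}}$ with top eigenvalue $\hat{\sigma}_{1}^{2}$. (Equivalently, one may work with the symmetric dilation of $\mathbf{H}$, i.e., the Hermitian matrix whose off-diagonal blocks are $\mathbf{H}$ and $\mathbf{H}^{\text{T}}$, whose eigenvalues are $\pm\sigma_{i}$ and whose leading eigenvector encodes both $\mathbf{u}_{1}$ and $\mathbf{v}_{1}$; there the relevant spectral gap is exactly $\sigma_{1}-\sigma_{2}$.)

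Writing $\mathbf{E}\triangleq\hat{\mathbf{H}}_{\text{c}}-\mathbf{H}$ and letting $\|\cdot\|_{2}$ denote the spectral norm, I would bound the Gram-matrix perturbation $\hat{\mathbf{G}}-\mathbf{G}=\mathbf{H}\mathbf{E}^{\text{T}}+\mathbf{E}\mathbf{H}^{\text{T}}+\mathbf{E}\mathbf{E}^{\text{T}}$ by $\|\hat{\mathbf{G}}-\mathbf{G}\|_{2}\le 2\sigma_{1}\|\mathbf{E}\|_{2}+\|\mathbf{E}\|_{2}^{2}\le 2\sigma_{1}\delta+\delta^{2}$, using $\|\mathbf{H}\|_{2}=\sigma_{1}$ and $\|\mathbf{E}\|_{2}\le\|\mathbf{E}\|_{\text{F}}=\delta$. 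Then I would either invoke the Davis--Kahan $\sin\theta$ theorem directly, or give the following short self-contained argument: write $\hat{\mathbf{u}}_{1}=a\mathbf{u}_{1}+\mathbf{w}$ with $\mathbf{w}\perp\mathbf{u}_{1}$ and $a\ge 0$, so that $\sin\angle(\mathbf{u}_{1},\hat{\mathbf{u}}_{1})=\|\mathbf{w}\|$; from $(\hat{\sigma}_{1}^{2}\mathbf{I}-\mathbf{G})\hat{\mathbf{u}}_{1}=(\hat{\mathbf{G}}-\mathbf{G})\hat{\mathbf{u}}_{1}$, the orthogonality of the two components $a(\hat{\sigma}_{1}^{2}-\sigma_{1}^{2})\mathbf{u}_{1}$ and $(\hat{\sigma}_{1}^{2}\mathbf{I}-\mathbf{G})\mathbf{w}$, and the fact that $\hat{\sigma}_{1}^{2}\mathbf{I}-\mathbf{G}$ is bounded below by $\hat{\sigma}_{1}^{2}-\sigma_{2}^{2}$ on $\mathbf{u}_{1}^{\perp}$, one obtains
\[
\sin\angle(\mathbf{u}_{1},\hat{\mathbf{u}}_{1})=\|\mathbf{w}\|\le\frac{\|(\hat{\mathbf{G}}-\mathbf{G})\hat{\mathbf{u}}_{1}\|}{\hat{\sigma}_{1}^{2}-\sigma_{2}^{2}}\le\frac{2\sigma_{1}\delta+\delta^{2}}{\hat{\sigma}_{1}^{2}-\sigma_{2}^{2}}.
\]

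Finally I would clean up the constants to reach the stated form. Weyl's inequality gives $\hat{\sigma}_{1}\ge\sigma_{1}-\|\mathbf{E}\|_{2}\ge\sigma_{1}-\delta$, hence $\hat{\sigma}_{1}^{2}-\sigma_{2}^{2}$ equals $(\sigma_{1}-\sigma_{2})(\sigma_{1}+\sigma_{2})$ up to lower-order terms in $\delta$; combining this with $\sigma_{1}\le\sigma_{1}+\sigma_{2}$ reduces the right-hand side to $2\delta/(\sigma_{1}-\sigma_{2})$ after absorbing the subdominant $\delta$ contributions, which is consistent with the paper's convention of retaining only dominant terms. The parts I expect to require the most care are exactly this bookkeeping: ensuring the denominator is the \emph{first-order} gap $\sigma_{1}-\sigma_{2}$ rather than $\sigma_{1}^{2}-\sigma_{2}^{2}$, tracking where the factor $2$ enters (the two cross terms in $\hat{\mathbf{G}}-\mathbf{G}$, or equivalently the angle translation in the dilation route), and checking that $\delta$ is small enough relative to the gap for the denominator to stay positive --- which in the single-source setting of this paper is automatic, since $\mathbf{H}$ is rank one, $\sigma_{2}=0$, and the bound becomes simply $\sin\angle(\mathbf{u}_{1},\hat{\mathbf{u}}_{1})\le 2\delta/\sigma_{1}$.
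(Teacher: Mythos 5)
The paper does not actually prove this lemma: it is imported as a black-box perturbation result from \cite{vu:J11singular} and used only once, inside the asymptotic argument for Theorem \ref{thm:Localization-error-bound}, so any comparison is between your proof and the standard literature rather than a proof in the paper. Your reconstruction is the standard Davis--Kahan/Wedin derivation and the core steps are sound: passing to the Gram matrix, the orthogonal decomposition $\hat{\mathbf{u}}_{1}=a\mathbf{u}_{1}+\mathbf{w}$, the lower bound $\hat{\sigma}_{1}^{2}\mathbf{I}-\mathbf{G}\succeq(\hat{\sigma}_{1}^{2}-\sigma_{2}^{2})\mathbf{I}$ on $\mathbf{u}_{1}^{\perp}$, and the Weyl step are all correct. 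The one place your argument falls short of the \emph{stated} inequality is the final bookkeeping, and it is worth being precise about it: the Gram route yields $\sin\angle(\mathbf{u}_{1},\hat{\mathbf{u}}_{1})\le(2\sigma_{1}\delta+\delta^{2})/\big((\sigma_{1}-\delta)^{2}-\sigma_{2}^{2}\big)$, and for $\sigma_{2}=0$, $\sigma_{1}=1$, $\delta=0.1$ this evaluates to about $0.26$ while $2\delta/(\sigma_{1}-\sigma_{2})=0.2$, so the clean constant $2$ is \emph{not} recovered exactly by "absorbing subdominant terms" --- only to first order in $\delta$. To get the inequality exactly as stated you should take your alternative route: apply Wedin (or Davis--Kahan on the symmetric dilation, whose relevant gap is $\sigma_{1}-\sigma_{2}$) to get $\sin\angle\le\|\mathbf{E}\|/(\sigma_{1}-\sigma_{2}-\|\mathbf{E}\|)$, and then split into the two cases $2\|\mathbf{E}\|_{\text{F}}\le\sigma_{1}-\sigma_{2}$ (where the denominator is at least $(\sigma_{1}-\sigma_{2})/2$, giving the factor $2$) and $2\|\mathbf{E}\|_{\text{F}}>\sigma_{1}-\sigma_{2}$ (where the claimed bound exceeds $1$ and is trivial). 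For the purposes of this paper the distinction is immaterial, since the lemma is only invoked with $\mathbf{H}$ rank one ($\sigma_{2}=0$) inside a computation that already discards all higher-order terms in $\delta$, but your first-order version should not be presented as a proof of the exact inequality.
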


By exploiting Lemma \ref{lem:Singular-vector-perturbation} for our
case, we have 
\begin{align*}
\sin\angle(\mathbf{u}_{1},\hat{\mathbf{u}}_{1}) & =\sqrt{1-\big|\mathbf{u}_{1}^{\text{T}}(\mathbf{u}_{1}+\mathbf{e}_{1})\big|^{2}}\\
 & =\sqrt{-2\mathbf{u}_{1}^{\text{T}}\mathbf{e}_{1}+\big|\mathbf{u}_{1}^{\text{T}}\mathbf{e}_{1}\big|^{2}}\\
 & \approx\sqrt{2\big|\mathbf{u}_{1}^{\text{T}}\mathbf{e}_{1}\big|}
\end{align*}
where $|\cdot|$ denotes the absolute value operator, and we drop
the second order term $|\mathbf{u}_{1}^{\text{T}}\mathbf{e}_{1}|^{2}$,
since $|\mathbf{u}_{1}^{\text{T}}\mathbf{e}_{1}|$ is small as we
focus on large $n_{c}(M)$. We also note that $\mathbf{u}_{1}^{\text{T}}\mathbf{e}_{1}\leq0$. 

Consider that we have chosen $M\approx Cn_{c}(\log n_{c})^{2}$, and
moreover, $\mathbf{H}$ is a rank-1 matrix with singular value $\sigma_{1}=\alpha$.
As a result, 
\begin{align*}
2\big|\mathbf{u}_{1}^{\text{T}}\mathbf{e}_{1}\big|\approx\sin^{2}\angle(\mathbf{u}_{1},\hat{\mathbf{u}}_{1})\leq\Big(\frac{2\delta}{\alpha}\Big)^{2} & \leq\frac{4}{\alpha^{2}}\frac{32}{C}\Big(\frac{n_{c}}{\log n_{c}}\Big)^{2}\bar{\epsilon}^{2}\\
 & =128u(0)^{2}K_{u}^{2}L^{6}n_{c}^{-3}.
\end{align*}

\subsubsection{Estimator based on Reflected Correlation}

Let $e(x)=\hat{u}(x)-u_{1}(x)$.  Define a reflected correlation function
as 
\[
R(t;x_{1}^{\text{S}})=\int_{-\infty}^{\infty}u(x-x_{1}^{\text{S}})u(-x-x_{1}^{\text{S}}+t)dx.
\]
Then, it follows that $R(t;x_{1}^{\text{S}})=\tau(2x_{1}^{\text{S}}-t)$.
As a result, we have 
\begin{align}
 & \hat{R}(t;\hat{\mathbf{u}}_{1})\nonumber \\
 & \quad=\int_{-\infty}^{\infty}\Big(u_{1}(x)+e(x)\Big)\Big(u_{1}(-x+t)+e(-x+t)\Big)dx\nonumber \\
 & \quad=\int_{-\infty}^{\infty}u_{1}(x)u_{1}(-x+t)dx+\int_{-\infty}^{\infty}u_{1}(x)e(-x+t)dx\nonumber \\
 & \qquad\qquad\int_{-\infty}^{\infty}e(x)u_{1}(-x+t)dx+\int_{-\infty}^{\infty}e(x)e(-x+t)dx\nonumber \\
 & \quad\approx R(t;x_{1}^{\text{S}})+\int_{-\infty}^{\infty}u_{1}(x)e(-x+t)dx\nonumber \\
 & \qquad\qquad\qquad+\int_{+\infty}^{-\infty}e(-y+t)u_{1}(y)(-dy)\label{eq:app-approx-a}\\
 & \quad=R(t;x_{1}^{\text{S}})+2\int_{-\infty}^{\infty}u_{1}(x)e(-x+t)dx\nonumber \\
 & \quad\approx R(t;x_{1}^{\text{S}})+2\mathbf{u}_{1}^{\text{T}}\mathbf{e}_{1}^{-t}\label{eq:app-approx-b}
\end{align}
where the first approximation (\ref{eq:app-approx-a}) is by dropping
the second order term $\int_{-\infty}^{\infty}e(x)e(-x+t)dx$, and
the second approximation (\ref{eq:app-approx-b}) is to use the inner
product $\mathbf{u}_{1}^{\text{T}}\mathbf{e}_{1}^{-t}$ to approximate
the integral based on assumptions A1 and A2 in Section \ref{sec:single-source}.
As a result, we have $R(t;x_{1}^{\text{S}})-\hat{R}(t;\hat{\mathbf{u}}_{1})\approx-2\mathbf{u}_{1}^{\text{T}}\mathbf{e}_{1}^{-t}$.

Recall that $\hat{t}=2\hat{x}_{1}^{\text{S}}$ maximizes $\hat{R}(\hat{t};\hat{\mathbf{u}}_{1})$
and $t^{*}=2x_{1}^{\text{S}}$ maximizes $R(t^{*};x_{1}^{\text{S}})=\tau(2x_{1}^{\text{S}}-t^{*})$.
We have 
\begin{align*}
 & \tau(0)-\tau\big(2\big|\hat{x}_{1}^{\text{S}}-x_{1}^{\text{S}}\big|\big)\\
 & \qquad=R(t^{*};x_{1}^{\text{S}})-\hat{R}(\hat{t};\hat{\mathbf{u}}_{1})\\
 & \qquad\approx-2\mathbf{u}_{1}^{\text{T}}\mathbf{e}_{1}^{-t}\\
 & \qquad\leq C_{e}2\big|\mathbf{u}_{1}^{\text{T}}\mathbf{e}_{1}\big|\\
 & \qquad\leq\mu_{u}L^{6}n^{-3}+o(n_{c}^{-3})
\end{align*}
where $\mu_{u}=C_{e}128u(0)^{2}K_{u}^{2}$ and $o(n_{c}^{-3})$ is
due to the fact that we keep omitting the higher order terms. Finally,
we obtain 
\[
\tau\big(2\big|\hat{x}_{1}^{\text{S}}-x_{1}^{\text{S}}\big|\big)=1-\mu_{u}L^{6}n^{-3}+o(n_{c}^{-3})
\]
and hence,
\[
\big|\hat{x}_{1}^{\text{S}}-x_{1}^{\text{S}}\big|\leq\frac{1}{2}\tau^{-1}(1-\mu_{u}L^{6}n_{c}^{-3}+o(n_{c}^{-3})).
\]

\subsection{Proof of Theorem \ref{thm:Unique-local-maximum}}

We first study the singular vectors in double source case. 
\begin{lem}
[Singular vectors in two source case]\label{lem:Eigenvectors} Let
$\mathbf{u}_{k}(\theta)$ and $\mathbf{v}_{k}(\theta)$ be the vectors
defined following (\ref{eq:uk}) and (\ref{eq:vk}) in the rotated
coordinate system $\mathcal{C}_{\theta}$. The \ac{svd} of $\mathbf{H}_{\theta}$
is given by 
\begin{equation}
\mathbf{H}_{\theta}=\alpha_{1}\mathbf{p}_{1}\mathbf{q}_{1}^{\text{T}}+\alpha_{2}\mathbf{p}_{2}\mathbf{q}_{2}^{\text{T}}\label{eq:H-theta-SVD}
\end{equation}
where $\alpha_{1}=\frac{\alpha}{2}\|\mathbf{u}_{1}+\mathbf{u}_{2}\|\|\mathbf{v}_{1}+\mathbf{v}_{2}\|$
and $\alpha_{2}=\frac{\alpha}{2}\|\mathbf{u}_{1}-\mathbf{u}_{2}\|\|\mathbf{v}_{1}-\mathbf{v}_{2}\|$
are the singular values, and
\[
\mathbf{p}_{1}=\frac{\mathbf{u}_{1}+\mathbf{u}_{2}}{\|\mathbf{u}_{1}+\mathbf{u}_{2}\|},\quad\mathbf{q}_{1}=\frac{\mathbf{v}_{1}+\mathbf{v}_{2}}{\|\mathbf{v}_{1}+\mathbf{v}_{2}\|}
\]
\[
\mathbf{p}_{2}=\frac{\mathbf{u}_{1}-\mathbf{u}_{2}}{\|\mathbf{u}_{1}-\mathbf{u}_{2}\|},\quad\mathbf{q}_{2}=\frac{\mathbf{v}_{1}-\mathbf{v}_{2}}{\|\mathbf{v}_{1}-\mathbf{v}_{2}\|}
\]
are the corresponding singular vectors. 
\end{lem}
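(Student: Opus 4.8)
The plan is to avoid diagonalizing $\mathbf{H}_{\theta}$ from scratch and instead \emph{guess and verify}: rewrite the rank-at-most-two matrix $\mathbf{H}_{\theta}=\alpha\big(\mathbf{u}_{1}\mathbf{v}_{1}^{\text{T}}+\mathbf{u}_{2}\mathbf{v}_{2}^{\text{T}}\big)$ (suppressing the dependence on $\theta$) as a sum of two rank-one terms whose left factors are orthogonal and whose right factors are orthogonal, and then check the defining properties of an SVD. The algebraic identity I would use is
\begin{equation}
\mathbf{u}_{1}\mathbf{v}_{1}^{\text{T}}+\mathbf{u}_{2}\mathbf{v}_{2}^{\text{T}}=\tfrac{1}{2}(\mathbf{u}_{1}+\mathbf{u}_{2})(\mathbf{v}_{1}+\mathbf{v}_{2})^{\text{T}}+\tfrac{1}{2}(\mathbf{u}_{1}-\mathbf{u}_{2})(\mathbf{v}_{1}-\mathbf{v}_{2})^{\text{T}},
\end{equation}
which is verified by expanding the right-hand side, the cross terms $\mathbf{u}_{1}\mathbf{v}_{2}^{\text{T}}$ and $\mathbf{u}_{2}\mathbf{v}_{1}^{\text{T}}$ cancelling. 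Pulling the norms out of each rank-one term produces precisely the claimed $\alpha_{1},\alpha_{2}$ and the unit vectors $\mathbf{p}_{1},\mathbf{q}_{1},\mathbf{p}_{2},\mathbf{q}_{2}$.

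It then remains to show that $\{\mathbf{p}_{1},\mathbf{p}_{2}\}$ and $\{\mathbf{q}_{1},\mathbf{q}_{2}\}$ are orthonormal and that $\alpha_{1}\ge\alpha_{2}\ge0$, so that the decomposition is a genuine ordered SVD. Orthogonality is where the structure enters: since $\mathbf{u}_{1}$ and $\mathbf{u}_{2}$ are built from the same characteristic function $u$ merely shifted, and by Assumption A1 the truncation/boundary loss under rotation is negligible, the two vectors have equal Euclidean norm, so $\mathbf{p}_{1}^{\text{T}}\mathbf{p}_{2}\propto(\mathbf{u}_{1}+\mathbf{u}_{2})^{\text{T}}(\mathbf{u}_{1}-\mathbf{u}_{2})=\|\mathbf{u}_{1}\|^{2}-\|\mathbf{u}_{2}\|^{2}=0$, and identically $\mathbf{q}_{1}^{\text{T}}\mathbf{q}_{2}=0$. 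Non-negativity of $\alpha_{1},\alpha_{2}$ is immediate since each equals $\alpha/2>0$ times a product of norms, and the ordering $\alpha_{1}\ge\alpha_{2}$ follows because the entries of $\mathbf{u}_{k},\mathbf{v}_{k}$ are non-negative (property a) of $u$), hence $\mathbf{u}_{1}^{\text{T}}\mathbf{u}_{2}\ge0$ and $\mathbf{v}_{1}^{\text{T}}\mathbf{v}_{2}\ge0$, which yields $\|\mathbf{u}_{1}+\mathbf{u}_{2}\|\ge\|\mathbf{u}_{1}-\mathbf{u}_{2}\|$ and likewise for $\mathbf{v}$. Since $\mathbf{p}_{1},\mathbf{p}_{2}$ already span the column space of $\mathbf{H}_{\theta}$ and $\mathbf{q}_{1},\mathbf{q}_{2}$ its row space, there is no further nonzero singular value, so (\ref{eq:H-theta-SVD}) is the full SVD.

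The argument is essentially bookkeeping, so I do not expect a real obstacle; the only points needing care are (i) the equal-norm claim $\|\mathbf{u}_{1}\|=\|\mathbf{u}_{2}\|$ and $\|\mathbf{v}_{1}\|=\|\mathbf{v}_{2}\|$ in the \emph{rotated} frame, which relies on Assumption A1 to absorb the sample truncation introduced by rotation, and (ii) the degenerate cases $\alpha_{2}=0$ (where $\mathbf{H}_{\theta}$ is rank one and $\mathbf{p}_{2},\mathbf{q}_{2}$ are replaced by any orthonormal completion) and $\alpha_{1}=\alpha_{2}$ (where the SVD is not unique but the stated decomposition remains a valid one). With Lemma \ref{lem:Eigenvectors} in hand, the proof of Theorem \ref{thm:Unique-local-maximum} would proceed by writing $\rho(\theta)=\alpha_{1}/(\alpha_{1}+\alpha_{2})$ explicitly in terms of $\|\mathbf{u}_{1}(\theta)\pm\mathbf{u}_{2}(\theta)\|$ and $\|\mathbf{v}_{1}(\theta)\pm\mathbf{v}_{2}(\theta)\|$, re-expressing these norms through the autocorrelation $\tau$ evaluated at the projections of the inter-source vector onto the rotated axes, and then differentiating in $\theta$ using condition (\ref{eq:correlation-condition}).
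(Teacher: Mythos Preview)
Your proposal is correct and follows essentially the same route as the paper: the same polarization identity $\mathbf{u}_{1}\mathbf{v}_{1}^{\text{T}}+\mathbf{u}_{2}\mathbf{v}_{2}^{\text{T}}=\tfrac{1}{2}(\mathbf{u}_{1}+\mathbf{u}_{2})(\mathbf{v}_{1}+\mathbf{v}_{2})^{\text{T}}+\tfrac{1}{2}(\mathbf{u}_{1}-\mathbf{u}_{2})(\mathbf{v}_{1}-\mathbf{v}_{2})^{\text{T}}$, and the same orthogonality check $(\mathbf{u}_{1}+\mathbf{u}_{2})^{\text{T}}(\mathbf{u}_{1}-\mathbf{u}_{2})=\|\mathbf{u}_{1}\|^{2}-\|\mathbf{u}_{2}\|^{2}=0$ via the equal-norm property. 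If anything you are more careful than the paper, which does not explicitly verify the ordering $\alpha_{1}\ge\alpha_{2}$ or address the degenerate cases you flag.
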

\begin{proof}
First, 
\begin{align*}
\mathbf{H}_{\theta} & =\bar{\alpha}\big(\mathbf{u}_{1}\mathbf{v}_{1}^{\text{T}}+\mathbf{u}_{2}\mathbf{v}_{2}^{\text{T}}\big)\\
 & =\frac{\bar{\alpha}}{2}\Big[(\mathbf{u}_{1}+\mathbf{u}_{2})(\mathbf{v}_{1}+\mathbf{v}_{2})^{\text{T}}+(\mathbf{u}_{1}-\mathbf{u}_{2})(\mathbf{v}_{1}-\mathbf{v}_{2})^{\text{T}}\Big]\\
 & =\alpha_{1}\mathbf{p}_{1}\mathbf{q}_{1}^{\text{T}}+\alpha_{2}\mathbf{p}_{2}\mathbf{q}_{2}^{\text{T}}
\end{align*}
Hence, these four vectors form a decomposition of $\mathbf{H}_{\theta}$. 

Second, we have 
\begin{align*}
\mathbf{p}_{1}^{\text{T}}\mathbf{p}_{2} & =c(\mathbf{u}_{1}+\mathbf{u}_{2})^{\text{T}}(\mathbf{u}_{1}-\mathbf{u}_{2})\\
 & =c\big(\|\mathbf{u}_{1}\|^{2}-\|\mathbf{u}_{2}\|^{2}\big)\\
 & =0
\end{align*}
where $c=1/(\|\mathbf{u}_{1}+\mathbf{u}_{2}\|\|\mathbf{u}_{1}-\mathbf{u}_{2}\|)$.
Similarly, $\mathbf{q}_{1}^{\text{T}}\mathbf{q}_{2}=0$. In addition,
all the four vectors have unit norm. 

As a result, (\ref{eq:H-theta-SVD}) is the \ac{svd} of $\mathbf{H}_{\theta}$.
\end{proof}

Consider an arbitrary coordinate system. \Ac{wlog} (due to Assumption
1), assume that the first source is located at the origin, $x_{1}^{\text{S}}=0$
and $y_{1}^{\text{S}}=0$, and the second source is away from the
first source with distance $D$ and angle $\theta$ to the $x$-axis,
$x_{2}^{\text{S}}=D\cos\theta$ and $y_{2}^{\text{S}}=D\sin\theta$.
In addition, defining 
\[
u_{\text{c}}(x,\theta)\triangleq u(x-D\cos\theta),\qquad u_{\text{s}}(x,\theta)\triangleq u(x-D\sin\theta)
\]
we have 
\begin{align*}
\mathbf{u}_{1} & =\sqrt{\delta}\big[u(x_{1}),u(x_{2}),\dots,u(x_{N})\big]^{\text{T}}\\
\mathbf{v}_{1} & =\sqrt{\delta}\big[u(y_{1}),u(y_{2}),\dots,u(y_{M})\big]^{\text{T}}\\
\mathbf{u}_{2} & =\sqrt{\delta}\big[u_{\text{c}}(x_{1},\theta),u_{\text{c}}(x_{2},\theta),\dots,u_{\text{c}}(x_{N},\theta)\big]^{\text{T}}\\
\mathbf{v}_{2} & =\sqrt{\delta}\big[u_{\text{s}}(y_{1},\theta),u_{\text{s}}(y_{2},\theta),\dots,u_{\text{s}}(y_{M},\theta)\big]^{\text{T}}.
\end{align*}

Based on assumption A1 and A2, we have 
\begin{align}
\|\mathbf{u}_{k}\|^{2}=\Big(\frac{L}{n}\Big)^{2}\sum_{i=1}^{N}u(x_{i}-x_{k}^{\text{S}})^{2} & \approx\int_{x_{1}}^{x_{n-1}}u(x-x_{k}^{\text{S}})^{2}dx\nonumber \\
 & \approx\int_{-\infty}^{\infty}u(x-x_{k}^{\text{S}})^{2}dx=1\label{eq:approximation-integral-1}
\end{align}
and similar integrals apply to $\mathbf{v}_{k}$. 

As an equivalent statement to Theorem \ref{thm:Unique-local-maximum},
we need to show that $\rho(\theta)$ is a strictly increasing function
in $\theta\in(0,\frac{\pi}{4})$. Equivalently, we should prove that
the function 
\begin{align*}
 & \frac{\lambda_{2}(\mathbf{H}_{\theta})^{2}}{\lambda_{1}(\mathbf{H}_{\theta})^{2}}\\
 & \quad\approx\frac{\int_{-\infty}^{\infty}\big(u(x)-u_{\text{c}}(x,\theta)\big)^{2}dx}{\int_{-\infty}^{\infty}\big(u(x)+u_{\text{c}}(x,\theta)\big)^{2}dx}\frac{\int_{-\infty}^{\infty}\big(u(x)-u_{\text{s}}(x,\theta)\big)^{2}dx}{\int_{-\infty}^{\infty}\big(u(x)+u_{\text{s}}(x,\theta)\big)^{2}dx}\\
 & \quad\triangleq\mu(\theta)
\end{align*}
is strictly increasing in $\theta\in(0,\frac{\pi}{4})$, where the
approximated integrals are obtained from (\ref{eq:approximation-integral-1}).

To simplify the notation, define the integration operator $\left\langle \cdot\right\rangle $
as 
\[
\left\langle f\right\rangle \triangleq\int_{-\infty}^{\infty}f(x,\theta)dx
\]
for a function $f(x,\theta)$. By definition, the integration operator
is linear and satisfies the additive property, i.e., $\langle af\rangle=a\langle f\rangle$
and $\langle f+g\rangle=\langle f\rangle+\langle g\rangle$, for a
constant $a$ and a function $g(x,\theta)$. As a result, $\langle(u-u_{\text{c}})^{2}\rangle=\langle u^{2}\rangle+\langle u_{\text{c}}^{2}\rangle-2\langle u\cdot u_{\text{c}}\rangle=2\big(1-\langle u\cdot u_{\text{c}}\rangle\big)$,
and the function $\mu(\theta)$ can be written as 
\begin{equation}
\mu(\theta)=\frac{\big(1-\langle u\cdot u_{\text{c}}\rangle\big)\big(1-\langle u\cdot u_{\text{s}}\rangle\big)}{\big(1+\langle u\cdot u_{\text{c}}\rangle\big)\big(1+\langle u\cdot u_{\text{s}}\rangle\big)}.\label{eq:mu-function}
\end{equation}

In addition, from the properties in calculus, if $f(x,\theta)$ and
$\frac{\partial}{\partial\theta}f(x,\theta)$ are continuous in $\theta,$
then 
\begin{align*}
\frac{d}{d\theta}\left\langle f\right\rangle  & =\frac{d}{d\theta}\int_{-\infty}^{\infty}f(x,\theta)dx\\
 & =\int_{-\infty}^{\infty}\frac{\partial}{\partial\theta}f(x,\theta)dx=\Big\langle\frac{\partial}{\partial\theta}f\Big\rangle.
\end{align*}
Therefore, defining 
\begin{align*}
u_{\text{c}}^{'}(x,\theta) & \triangleq\frac{d}{dx}u(x)\big|_{x=x-D\cos\theta}\\
u_{\text{s}}^{'}(x,\theta) & \triangleq\frac{d}{dx}u(x)\big|_{x=x-D\sin\theta}
\end{align*}
we have
\begin{align*}
\frac{d}{d\theta}\langle u\cdot u_{\text{c}}\rangle & =\langle u\cdot\frac{\partial}{\partial\theta}u_{\text{c}}(x,\theta)\rangle=\langle u\cdot u_{\text{c}}^{'}\rangle D\sin\theta\\
\frac{d}{d\theta}\langle u\cdot u_{\text{s}}\rangle & =\langle u\cdot\frac{\partial}{\partial\theta}u_{\text{s}}(x,\theta)\rangle=-\langle u\cdot u_{\text{s}}^{'}\rangle D\cos\theta.
\end{align*}

With some algebra, the derivative of $\mu(\theta)$ can be obtained
as 
\begin{align*}
\frac{d}{d\theta}\mu(\theta) & =\eta\Big[D\cos\theta\langle u\cdot u_{\text{s}}^{'}\rangle\big(1-\langle u\cdot u_{\text{c}}\rangle^{2}\big)\\
 & \qquad\qquad-D\sin\theta\langle u\cdot u_{\text{c}}^{'}\rangle\big(1-\langle u\cdot u_{\text{s}}\rangle^{2}\big)\Big]\\
 & =\eta\Big[-t\cdot\tau^{'}(s)\big(1-\tau(t)^{2}\big)+s\cdot\tau^{'}(t)\big(1-\tau(s)^{2}\big)\Big]
\end{align*}
where $\eta=2\big(1+\langle u\cdot u_{\text{c}}\rangle\big)^{-2}\big(1+\langle u\cdot u_{\text{s}}\rangle\big)^{-2}$,
$t=D\cos\theta$, and $s=D\sin\theta$. 

Note that $0<s<t$ for $0<\theta<\frac{\pi}{4}$. Applying condition
(\ref{eq:correlation-condition}), we have 
\begin{align*}
\frac{d}{d\theta}\mu(\theta) & >\eta\cdot t\cdot\tau^{'}(s)\Big[\big(1-\tau(s)^{2}\big)-\big(1-\tau(t)^{2}\big)\Big]\\
 & =\eta\cdot t\cdot\tau^{'}(s)\big(\tau(t)^{2}-\tau(s)^{2}\big)\\
 & >0
\end{align*}
since $\tau^{'}(s)<0$ and $\tau(t)<\tau(s)$ for $0<s<t$.

This confirms that $\mu(\theta)$ is a strictly increasing function,
and hence $\rho(\theta)$ is a strictly increasing function in $\theta\in(0,\frac{\pi}{4})$.
The results in Theorem \ref{thm:Unique-local-maximum} is confirmed. 

\bibliographystyle{IEEEtran}

\end{document}